\DeclareOldFontCommand{\bi}{\bfseries\itshape}{\bfseries\itshape}
\def\aptLtoXcmd#1#2{#2}
\long\def\aptLtoX{\@ifnextchar[{\@aptLtoX}{\@aptLtoX[]}}
\long\def\@aptLtoX[#1]#2#3{#3}
\def\aptLtoXdel#1{}
\newcommand{\bigO}[1]{\mathchoice{O\left(#1\right)}{O(#1)}{O(#1)}{O(#1)}} 
\DeclareMathOperator{\loglog}{loglog} 
\newcommand{\expmm}{\omega} 
\newcommand{\timepm}[1]{\mathchoice{\operatorname{\mathsf{M}}\!\left(#1\right)}{\operatorname{\mathsf{M}}(#1)}{\operatorname{\mathsf{M}}(#1)}{\operatorname{\mathsf{M}}(#1)}} 
\newcommand{\timepmm}[2]{\mathchoice{\operatorname{\mathsf{MM}}\!\left(#1,#2\right)}{\operatorname{\mathsf{MM}}(#1,#2)}{\operatorname{\mathsf{MM}}(#1,#2)}{\operatorname{\mathsf{MM}}(#1,#2)}} 
\newcommand{\timepmmrec}[2]{\mathchoice{\operatorname{\overline{\mathsf{MM}}}\!\left(#1,#2\right)}{\operatorname{\overline{\mathsf{MM}}}(#1,#2)}{\operatorname{\overline{\mathsf{MM}}}(#1,#2)}{\operatorname{\overline{\mathsf{MM}}}(#1,#2)}} 
\newcommand{\ZZ}{\mathbb{Z}} 
\newcommand{\NN}{\mathbb{N}} 
\newcommand{\ZZp}{\mathbb{Z}_{> 0}} 
\newcommand{\field}{\mathbb{F}} 
\newcommand{\pring}{\field[x]} 
\newcommand{\vecspace}[1]{\field^{#1}} 
\newcommand{\matspace}[2]{\field^{#1 \times #2}} 
\newcommand{\pvecspace}[1]{\pring^{#1}}
\newcommand{\pmatspace}[2]{\pring^{#1 \times #2}} 
\newcommand{\card}[1]{\# #1}  
\newcommand{\range}[1]{\{1,\ldots,#1\}}  
\newcommand{\mat}[1]{#1} 
\newcommand{\matcol}[2]{{#1}_{*,#2}} 
\newcommand{\matcols}[2]{{#1}_{*,#2}} 
\newcommand{\submat}[3]{{#1}_{#2,#3}} 
\newcommand{\trsp}[1]{#1^\mathsf{T}}
\newcommand{\ident}[1]{\mat{I}_{#1}} 
\newcommand{\matz}{\mat{0}}  
\newcommand{\rank}[1]{\operatorname{rank}(#1)}
\newcommand{\sumTuple}[1]{|#1|} 
\newcommand{\cdeg}[2][]{\mathrm{cdeg}_{{#1}}(#2)}
\newcommand{\shift}{s}
\newcommand{\coeffs}[2]{\operatorname{coeffs}(#1,#2)}
\def\rem{\, \text{rem}\, }
\newcommand{\kry}[3]{\operatorname{K}_{#3}(#1,#2)}
\newcommand{\orb}[2]{\operatorname{Orb}(#1,#2)}
\newcommand{\expoH}{{c_1}}
\newcommand{\expofinal}{c}
\newcommand{\myparagraph}[1]{\smallskip\emph{#1.}} 
\newcommand{\avgdeg}{t} 
\newcommand{\thres}{{t}} 
\newcommand{\DD}{{\ell}} 
\algnewcommand{\CommentLine}[1]{\(\triangleright\;\;\) \emph{\small #1} \(\;\;\triangleleft\)} 
\algnewcommand{\InlineIf}[2]{
   \algorithmicif\ #1\ \algorithmicthen\ #2}
\algnewcommand{\InlineFor}[2]{\algorithmicfor\ #1\ \algorithmicdo\ #2} 
\algrenewcommand\Call[2]{\nameref{#1}\ifthenelse{\equal{#2}{}}{}{\ensuremath{(#2)}}}}
\newcommand{\StateX}[1]{%
  \setlength\@tempdima{\algorithmicindent}%
  \Statex\hskip\dimexpr#1\@tempdima\relax}
\newcommand{\algoCaptionLabel}[2]{
     \caption[\textproc{#1}]{\textproc{#1}\ifthenelse{\equal{#2}{}}{}{$(#2)$} }%
     \NR@gettitle{\textproc{#1}}%
      \label{algo:#1}
     }%
\begin{document}

\title{Computing Krylov iterates in the time of matrix multiplication}

\author{Vincent Neiger}
\orcid{0000-0002-8311-9490}
\affiliation{
\institution{Sorbonne Université, CNRS, LIP6}
\city{F-75005 Paris}
  \country{France}
}

\author{Cl\'ement Pernet}
\orcid{0000-0001-6970-0417}
\affiliation{
  \institution{Univ. Grenoble Alpes, Grenoble INP}
  \department{CNRS, LJK UMR 5224}
  \city{Grenoble}
  \country{France}
}

\author{Gilles Villard}
\orcid{0000-0003-1936-7155}
\affiliation{
  \institution{CNRS, Univ. Lyon,  ENS de Lyon, Inria,}
  \department{UCBL, LIP UMR 5668}
  \city{Lyon}
  \country{France}
}

\begin{CCSXML}
<ccs2012>
<concept>
<concept_id>10010147.10010148.10010149.10010150</concept_id>
<concept_desc>Computing methodologies~Algebraic algorithms</concept_desc>
<concept_significance>500</concept_significance>
</concept>
<concept>
<concept_id>10003752.10003809</concept_id>
<concept_desc>Theory of computation~Design and analysis of algorithms</concept_desc>
<concept_significance>500</concept_significance>
</concept>
</ccs2012>
\end{CCSXML}

 \keywords{Krylov iteration; Frobenius normal form; Polynomial linear algebra.}

\begin{abstract}
  Krylov methods rely on iterated matrix-vector products \(A^k u_j\) for an
  \(n\times n\) matrix \(A\) and vectors \(u_1,\ldots,u_m\). The space spanned
  by all iterates \(A^k u_j\) admits a particular basis --- the \emph{maximal
  Krylov basis} --- which consists of iterates of the first vector \(u_1, Au_1,
  A^2u_1,\ldots\), until reaching linear dependency, then iterating similarly
  the subsequent vectors until a basis is obtained. Finding minimal polynomials
  and Frobenius normal forms is closely related to computing maximal Krylov
  bases. The fastest way to produce these bases was, until this paper,
  Keller-Gehrig's 1985 algorithm whose complexity bound \(\bigO{n^\omega
  \log(n)}\) comes from repeated squarings of \(A\) and logarithmically many
  Gaussian eliminations. Here \(\omega>2\) is a feasible exponent for matrix
  multiplication over the base field.  We present an algorithm computing the
  maximal Krylov basis in \(\bigO{n^\omega\loglog(n)}\) field operations when
  \(m \in \bigO{n}\), and even \(\bigO{n^\omega}\) as soon as
  \(m\in\bigO{n/\log(n)^\expofinal}\) for some fixed real \(\expofinal>0\).  As
  a consequence, we show that the Frobenius normal form together with a
  transformation matrix can be computed deterministically in \(\bigO{n^\omega
  (\loglog(n))^2}\), and therefore matrix exponentiation~\(A^k\) can be performed
  in the latter complexity if \(\log(k) \in \bigO{n^{\omega-1-\varepsilon}}\)
  for some fixed $\varepsilon>0$.  A key idea for these improvements is to rely on fast
  algorithms for \(m\times m\) polynomial matrices of average degree \(n/m\),
  involving high-order lifting and minimal kernel bases.
\end{abstract}

\thanks{%
  The authors are supported by \grantsponsor{anr}{Agence nationale de la
  recher\-che (ANR)}{https://anr.fr} projects
  \grantnum{anr}{ANR-23-CE48-0003-01} \textsc{CREAM},
  \grantnum{anr}{ANR-22-PECY-0010-01} \textsc{PEPR Cryptanalyse} and
  \grantnum{anr}{ANR-21-CE39-0006} \textsc{SANGRIA};
  by the joint
  ANR-\grantsponsor{fwf}{Austrian Science Fund FWF}{https://www.fwf.ac.at/en/}
  projects \grantnum{anr}{ANR-22-CE91-0007}
  \textsc{EAGLES}
  and by the \grantsponsor{afosr}{EOARD-AFOSR}{} project
  \grantnum{afosr}{FA8665-20-1-7029}.
}

\maketitle

%
%

\section{Introduction}
\label{sec:intro}

We present a new deterministic algorithm for the computation of some specific
Krylov matrices, which play a central role in determining the structure of
linear operators. To a matrix \(A\in \matspace{n}{n}\) over an arbitrary
commutative field \(\field\), a (column) vector \(u\in \vecspace{n}\) and a
nonnegative integer \(d \in \NN\), we associate the \emph{Krylov matrix}
\(\kry{A}{u}{d}\) formed by the first \(d\) iterates of \(u\) through \(A\):
\begin{equation}
  \label{eqn:krylov_iterate}
  \kry{A}{u}{d} =
  \begin{bmatrix}
    u & Au & \cdots & A^{d-1}u
  \end{bmatrix}
  \in \matspace{n}{d}.
\end{equation}
More generally, to \(m\) vectors \(U = [u_1 \;\; \cdots \;\; u_m]
\in\field^{n\times m}\) and a tuple \(d=(d_1,\ldots,d_m)\in \NN^m\), we
associate the Krylov matrix
\begin{equation}
  \label{eqn:krylov_matrix}
  \kry{A}{U}{d}
  =
  \begin{bmatrix}
    \kry{A}{u_1}{d_1} & \cdots & \kry{A}{u_m}{d_m}
  \end{bmatrix}
  \in \matspace{n}{\sumTuple{d}},
\end{equation}
where \(\sumTuple{d}\) is the sum \(d_1+\cdots+d_m\). (Note that it will prove
convenient to allow \(m=0\) and \(d_i=0\).) Such matrices are used to construct
special bases of the \(A\)-invariant subspace 
\[
  \orb{A}{U} = \operatorname{Span}_\field(
                  \{A^i u_j, i\in \NN, j\in \range{m}\}
                  ).
\]
Indeed, for a given \(A\) and \(U\) there always exists a tuple \(d\) such that
the columns of \(\kry{A}{U}{d}\) form a basis of \(\orb{A}{U}\)
(Section~\ref{sec:find_degrees:KG}).  In this paper, we focus on the computation of
the unique such basis of \(\orb{A}{U}\) whose tuple~\(d\) is the
lexicographically largest one~\cite[Sec.\,5]{KeGe85}. We call this \(d\) the
\emph{maximal (Krylov) indices} of \(\orb{A}{U}\), and the corresponding basis
the \emph{maximal Krylov basis}.

Our main result is a deterministic algorithm that computes the maximal Krylov
basis in \(\bigO{n^\expmm\loglog(n)}\) field operations when $m\in O(n)$, where
$\expmm >2$ is a feasible exponent for the cost of square matrix multiplication
over~$\field$~\cite{DWZ23,VWXXZ24,ADVWXXZ24}. The bound becomes $O(n^\expmm)$ as
soon as the number \(m\) of initial vectors in \(U\) is in
\(\bigO{n/\log(n)^\expofinal}\) for some constant \(\expofinal>0\)
(see \cref{thm:maxkrylovbasis}). This is an improvement over the best previously
known complexity bound \(\bigO{n^\expmm \log(n)}\), for an algorithm due to
Keller-Gehrig~\cite{KeGe85}. In particular, to get  down to $O(n^\expmm)$, we
avoid an ingredient that is central in the latter algorithm and related ones,
which is to compute logarithmically many powers of \(A\) by repeated squaring
(see Section~\ref{sec:find_degrees:KG}). \enlargethispage{-12pt}

\myparagraph{Overview of the approach}
The main idea is to use operations on polynomial matrices rather than linear
transformations. In this direction, we are following in the footsteps of e.g.\
\cite{ZLS15}, where polynomial matrix inversion is exploited to compute
sequences of matrix powers, and \cite{NeigerPernet21}, where polynomial matrix
normal forms and block-triangular decompositions allow the efficient
computation of the characteristic polynomial.
A key stage we introduce consists in transforming between left and right matrix
fraction descriptions:
\begin{equation}
    \label{eqn:series_expansion_intro}
    S(x)T(x)^{-1} = (I - x A)^{-1} U = \sum_{k\ge 0} x^k A^k U,
\end{equation}
with $S\in \pmatspace{n}{m}$ and $T\in \pmatspace{m}{m}$. For $m\leq n$, one
may see Eq.~(\ref{eqn:series_expansion_intro}) as considering a compressed
description \(ST^{-1}\), which has larger polynomial degrees
and smaller matrix dimensions than the description $(I - x A)^{-1} U$. The
power series expansion of \(ST^{-1}\), when suitably truncated, produces a
Krylov basis.  

After some preliminary reminders on polynomial matrices in Section~\ref{sec:prelimin},
the first algorithms are given in Section~\ref{sec:KrylovMatrix}. There, our
contribution is specifically adapted to the case where the compression is fully
effective, that is, when $m$ is away from \(n\) (at least slightly:
\(\bigO{n/\log(n)^\expofinal}\) suffices).
In this case, appropriate $S$ and
$T$ are computed using $O(n^\expmm)$ arithmetic operations
thanks to the kernel basis algorithm of
\cite{ZhouLabahnStorjohann12} and its analysis
in~\cite{JeannerodNeigerSchostVillard2017,NeigerPernet21}
(\cref{subsec:kernel}). 

The next steps are to determine the maximal indices and to compute a truncated
series expansion of \(ST^{-1}\).
First we explain in Section~\ref{sec:find_degrees} that the maximal indices of
\(\orb{A}{U}\) are obtained by working, equivalently to
Eq.~(\ref{eqn:series_expansion_intro}), from certain denominator matrices of
$(xI-A)^{-1}U$ (see \cref{eq:matfracrev}). The indices are computed as diagonal
degrees if the denominator is triangular (Lemma~\ref{lem:hermite_gives_degrees}).
It follows that a Hermite form computation allows us to obtain them
efficiently~\cite{LabahnNeigerZhou2017}.
We then give in Section~\ref{sec:compute_krylow_few} an algorithm for computing a
Krylov matrix \(\kry{A}{U}{d}\) for an arbitrary given $d$, which we will apply
afterwards with the maximal indices for \(d\).
According to Eq.~(\ref{eqn:series_expansion_intro}) and given a tuple $d =({{d_1,
\ldots, d_m}})\in \NN^m$, this Krylov matrix can be obtained from the expansion
of $ST^{-1}$ with column $j$ truncated modulo $x^{d_j}$ for $1\leq j \leq m$.
To deal with the unbalancedness of degrees and truncation order, this expansion
is essentially computed using high-order lifting~\cite{Storjohann2003},
combined with the partial linearization technique of~\cite{GuSaStVa12}.
(To avoid diverting the focus of Section~\ref{sec:KrylovMatrix}, details about this are
deferred until Section~\ref{sec:polmat}.) If \(m\in\bigO{n/\log(n)^\expofinal}\),
both the maximal indices and $\kry{A}{U}{d}$ for any~$d$ such that
\(\sumTuple{d}=O(n)\) can be computed using $O(n^\expmm)$ arithmetic
operations. 

Our general algorithm computing maximal Krylov bases is given in
Section~\ref{sec:KrylovBasis}. The ability to reduce the cost for certain $m$, as seen
above, allows to improve the general case $m=O(n)$ and obtain the complexity
bound \(\bigO{n^\expmm\loglog(n)}\). 
Algorithm \texttt{MaxKrylovBasis} is a hybrid one, using the Keller-Gehrig
strategy as well as polynomial matrices.
In the same spirit as the approach in~\cite[Sec.\,5]{KeGe85}, 
we start with the partial computation of a maximal Krylov basis,
from the whole $U$ but for only a few iterations,
i.e.\ only considering iterates \(A^{k_j} u_j\) for \(k_j\) bounded by $\log(n)^\expofinal$. 
This allows us to isolate \(\bigO{n/\log(n)^\expofinal}\) vectors 
for which further iterations are needed.
A maximal basis is then computed from these vectors, based on polynomial matrix operations.  
The final maximal basis is obtained by appropriately merging the short and long sequences of Krylov
iterates henceforth available, via fast Gaussian elimination~\cite[Sec.\,4]{KeGe85}.

\myparagraph{Frobenius normal form and extensions}
Krylov matrices are a fundamental tool for decomposing a vector space
with respect to a linear operator (see e.g.\ \cite{giesbrecht1995nearly,Sto00}
for detailed algorithmic treatments). 
The related Frobenius normal form, and the Kalman decomposition
 for linear dynamical
systems~\cite{kal63,Kailath80},
are briefly discussed in
Section~\ref{sec:decomp}. The fastest known deterministic algorithm for the Frobenius
form is found in~\cite[Prop.\,9.27]{Sto00}; the employed approach reveals
in particular that the problem reduces to computing
$O(\loglog(n))$ maximal Krylov bases. The cost for the Kalman decomposition is
bounded by that of the computation of a constant number of Krylov bases.
Thus our results improve the complexity bounds for these two problems. Matrix
exponentiation is also accelerated, via its direct link with Frobenius
form computation~\cite[Thm.\,7.3]{giesbrecht1995nearly}.

\myparagraph{Computational model}
Throughout this paper, $\field$ is an effective field. The cost analyses
consist in giving an asymptotic bound on the number of arithmetic operations in
$\field$ used by the algorithm. The operations are addition, subtraction,
multiplication, and inversion in the field, as well as testing whether a given
field element is zero. We use that two polynomials in $\pring$ of degree
bounded by $d$ can be multiplied in $O(d \log(d) \loglog(d))$
operations in \(\field\)~\cite[Chap.\,8]{GathenGerhard1999}. On rare occasions, mostly in
Section~\ref{sec:polmat}, we use a multiplication time function $d \mapsto \timepm{d}$
for $\pring$
to develop somewhat more general results~\cite[Chap.\,8]{GathenGerhard1999}.
This multiplication time function is subject to some convenient assumptions
(see Section~\ref{sec:polmat:timefunc}), which are satisfied in particular for an
$O(d\log(d) \loglog(d))$ algorithm. 

\myparagraph{Notation}
For an $m\times n$ matrix $A$, we write $a_{ij}$ for its entry $(i,j)$. Given
sets $I$ and $J$ of row and column indices, $A_{I,J}$ stands for the
corresponding submatrix of $A$; we use~$*$ to denote all indices, such as in
$A_{I,*}$ or~$A_{*,J}$.
We manipulate tuples  $d\in (\NN \cup \{-\infty\})^m$ of indices or polynomial
degrees, and write $\sumTuple{d}$ for the sum of the entries. For
a polynomial matrix $A \in \field[x]^{m\times n}$, the column degree $\cdeg{A}$
is the tuple of its column degrees $(\max_{1\leq i \leq m} \deg(a_{i,j}))_{1\leq j \leq n}$.

%
%

\section{Kernel basis, Hermite normal form}
\label{sec:prelimin}

We recall the complexity bounds for two fundamental problems which we rely on:
kernel basis and Hermite normal form. The more technical presentation of some
of the other ingredients needed to manipulate polynomial matrices, such as
``unbalanced'' truncated inversion and multiplication, is deferred to
Section~\ref{sec:polmat}.

\subsection{Minimal kernel basis}
\label{subsec:kernel}

A core tool in \cref{algo:MaxIndices,algo:KrylovMatrix} is the computation of
minimal kernel bases of polynomial matrices \cite[Sec.\,6.5.4,
p.\,455]{Kailath80}. The matrix fraction description in
Eq.~(\ref{eqn:series_expansion_intro}) can indeed be rewritten as   
\[
\begin{bmatrix} I-xA  & -U \end{bmatrix}
\begin{bmatrix} S \\ T \end{bmatrix} = 0.
\]
For a matrix \(F \in \pmatspace{n}{m}\), its (right) kernel is the
\(\pring\)-module formed by the vectors \(p \in \pvecspace{m}\) such that \(Fp
= 0\); it has rank \(m - \rank{F}\). A kernel basis is said to be minimal if it
is column reduced, that is, its leading matrix has full column rank
\cite[Sec.\,6.3, p.\,384]{Kailath80}. An efficient algorithm for minimal kernel
bases was described in~\cite{ZhouLabahnStorjohann12}, and its complexity was
further analyzed in the case of a full rank input \(F\) in
\cite[App.\,B]{JeannerodNeigerSchostVillard2017} and
\cite[Lem.\,2.10]{NeigerPernet21}. We will use the following particular case of
the latter result:

\begin{lemma} \emph{(\cite[Algo.\,1]{ZhouLabahnStorjohann12}, and analyses in 
\cite{JeannerodNeigerSchostVillard2017,NeigerPernet21}.)}
  \label{lem:kernel_basis}
  Let \(F \in \pmatspace{n}{(m+n)}\) have rank \(n\) and degree \(\le 1\), with
  \(m \in \bigO{n}\). There is an algorithm \textproc{MinimalKernelBasis} which,
  on input \(F\), returns a minimal kernel basis \(B\in\pmatspace{(m+n)}{m}\)
  for \(F\) using \(\bigO{n^\expmm}\) operations in~\(\field\). Furthermore,
  \(\sumTuple{\cdeg{B}} \le n\).
\end{lemma}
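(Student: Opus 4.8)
\emph{Proof plan.} The statement is a specialization of the cited results to the regime of constant-degree inputs of dimension \(\bigO{n}\), so the plan is to recall why those results apply and what they yield here.

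First I would fix the shape of the output. Since \(F\) has rank \(n\) and \(m+n\) columns, its right kernel is a free \(\pring\)-module of rank \((m+n)-\rank{F}=m\); hence every basis of this kernel has exactly \(m\) columns, so a kernel basis is indeed an element of \(\pmatspace{(m+n)}{m}\), and a column-reduced (that is, minimal) such basis exists since column reduction may be applied to an arbitrary basis. The routine \textproc{MinimalKernelBasis} is \cite[Algo.\,1]{ZhouLabahnStorjohann12}, designed precisely to return a minimal kernel basis of a full-row-rank polynomial matrix, so correctness is exactly the correctness of that algorithm.

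The cost bound is the main point. The Zhou--Labahn--Storjohann algorithm recursively halves the column dimension, solves kernel-basis subproblems on the pieces, and recombines the outputs through order (approximant) basis computations and polynomial matrix products; its generic complexity estimate carries a logarithmic overhead coming from the degrees arising in these products. The refined analyses for a full-rank input in \cite[App.\,B]{JeannerodNeigerSchostVillard2017} and \cite[Lem.\,2.10]{NeigerPernet21} show that when \(F\) has \(n\) rows, \(\bigO{n}\) columns, and degree \(\le 1\), the partial linearizations used inside keep all intermediate matrices of dimension \(\bigO{n}\) and of controlled total degree, so that the whole computation fits in \(\bigO{n^\expmm}\) operations in \(\field\) with no logarithmic factor. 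I would invoke that analysis verbatim; redoing it — bounding the work over the \(\bigO{\log n}\) recursion levels tightly enough that it collapses to \(\bigO{n^\expmm}\) rather than \(\bigO{n^\expmm\log n}\) — is the genuine technical obstacle, and it is precisely what the cited lemmas establish.

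Finally, the degree bound. For a minimal (column-reduced) kernel basis \(B\) of a matrix \(F\) of full row rank, a standard argument — the predictable-degree property of column-reduced matrices together with a Cauchy--Binet bound on the degrees of the maximal minors of \(F\) — gives \(\sumTuple{\cdeg{B}} \le \sumTuple{\operatorname{rdeg}(F)}\), where \(\operatorname{rdeg}(F)\) denotes the tuple of row degrees of \(F\); in the structured instance \(F=[\,I-xA \mid -U\,]\) relevant to this paper, the same bound also reads off a right matrix-fraction description \(ST^{-1}\) of \((I-xA)^{-1}U\), whose denominator determinant divides \(\det(I-xA)\), of degree at most \(n\). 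Since here \(F\) has \(n\) rows, each of degree at most \(1\), we get \(\sumTuple{\operatorname{rdeg}(F)} \le n\), hence \(\sumTuple{\cdeg{B}} \le n\). This fact about minimal kernel bases is part of the cited results \cite{ZhouLabahnStorjohann12,JeannerodNeigerSchostVillard2017,NeigerPernet21}, so I would quote it rather than reprove it.
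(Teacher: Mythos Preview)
Your proposal is correct and, for the complexity claim, proceeds exactly as the paper does: invoke \cite{ZhouLabahnStorjohann12,JeannerodNeigerSchostVillard2017,NeigerPernet21} for the \(\bigO{n^\expmm}\) bound and note that the hypotheses match.

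For the degree bound \(\sumTuple{\cdeg{B}}\le n\) there is a small divergence worth recording. You argue via \(\sumTuple{\cdeg{B}}\le\sumTuple{\operatorname{rdeg}(F)}\le n\), a Forney/Kailath-style minimal-basis duality fact; this is correct, but it is not literally the statement appearing in the cited references. The paper instead stays inside the shifted framework of \cite[Thm.\,3.4]{ZhouLabahnStorjohann12}: it takes the uniform shift \(s=(1,\ldots,1)\in\NN^{m+n}\), checks the hypothesis \(s\ge\cdeg{F}\) (immediate since \(\deg(F)\le 1\)), and reads off that the sum of \(s\)-column degrees of \(B\) is at most \(\sumTuple{s}=m+n\); since the shift is uniform and \(B\) has \(m\) columns, this sum equals \(\sumTuple{\cdeg{B}}+m\), whence \(\sumTuple{\cdeg{B}}\le n\). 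Both routes are short; the paper's has the advantage of being a direct instantiation of the theorem it cites, with no detour through row degrees or minors.
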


The complexity bound in \cite{NeigerPernet21} uses a multiplication time
function with assumptions that are satisfied in our case (see
\cite[Sec.\,1.1]{NeigerPernet21}). 
Apart from supporting degree~\(>1\), the three listed references also
consider the more general \emph{shifted} reduced bases \cite{BLV06}. The
non-shifted case is obtained by using the uniform shift \(\shift = (1,\ldots,1)
\in \NN^{m+n}\), for which \(s\)-reduced bases are also (non-shifted) reduced
bases. This shift does satisfy the input requirement \(\shift\ge \cdeg{F}\) and
it has sum \(\sumTuple{\shift} = m+n\);
\cite[Thm.\,3.4]{ZhouLabahnStorjohann12} then guarantees that \(B\) has sum of
\(s\)-column degrees at most \(m+n\), which translates as \(\sumTuple{\cdeg{B}}
\le n\).

\subsection{Hermite normal form}

A nonsingular polynomial matrix \(H \in \pmatspace{m}{m}\) is in (column)
Hermite normal form if it is upper triangular, with monic diagonal entries, and
all entries to the right of the diagonal have degree less than that of the
corresponding diagonal entry: \(\deg(h_{ij}) < \deg(h_{ii})\) for \(1 \le i < j
\le m\). Given a nonsingular matrix \(T \in \pmatspace{m}{m}\), there is a
unique matrix \(H\in \pmatspace{m}{m}\) in Hermite normal form which can be
obtained from~\(T\) via unimodular column operations, meaning \(T V = H\) for
some \(V \in \pmatspace{m}{m}\) with \(\det(V) \in \field\setminus\{0\}\). It
is called the Hermite normal form of \(T\).
In fact, in this paper we only seek the diagonal degrees of \(H\)
(see Section~\ref{sec:find_degrees:H}). Since off-diagonal entries can be reduced by
diagonal ones through unimodular column operations, these diagonal degrees are
the same for all triangular forms unimodularly right equivalent to \(T\).

\begin{lemma}\emph{(\cite[Prop.\,3.3]{LabahnNeigerZhou2017}.)}
  \label{lem:hermite_form}
  Let $T$ be nonsingular in \(\pmatspace{m}{m}\). There is an algorithm
  \textproc{HermiteDiagonal} which takes $T$ as input and returns the diagonal
  entries \((h_{11},\ldots,h_{mm}) \in \pring^m\) of the Hermite normal form of
  \(T\) using \(\bigO{m^{\expmm-1} n \log(n)^\expoH}\) operations in
  \(\field\), where \(n = \max(m,\sumTuple{\cdeg{T}})\) and \(\expoH\) is a
  positive real constant.
\end{lemma}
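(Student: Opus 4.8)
\medskip\noindent
This is \cite[Prop.\,3.3]{LabahnNeigerZhou2017}; below is the route I would take to prove it. The starting point is the classical fact that for a sufficiently unbalanced ``monomial'' shift $\shift = (0, D, 2D, \ldots, (m-1)D)$ with $D > \sumTuple{\cdeg{T}}$, the $\shift$-ordered weak Popov form of $T$ coincides (after the standard reordering) with its column Hermite form, because such a shift forces the pivot of each column to sit at its bottom-most nonzero position. Hence it suffices to compute, deterministically and fast, a $\shift$-reduced basis of the lattice spanned by the columns of $T$ for this particular $\shift$, and then read off its diagonal (normalizing each entry to be monic yields exactly $(h_{11},\ldots,h_{mm})$). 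The two sources of inefficiency in a direct approach are (i) the column degrees of $T$ may be very unbalanced and (ii) the shift $\shift$ is itself extremely unbalanced; both are handled by the column partial linearization of \cite{GuSaStVa12}, which replaces $T$ (resp.\ the shift) by an equivalent instance with $\bigO{m}$ columns and entries of degree $\bigO{n/m}$ (resp.\ with balanced shift), in such a way that the sought diagonal of the original problem is recoverable from that of the linearized one.

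Once the data are balanced, a $\shift$-reduced basis --- and in particular the Hermite diagonal --- is computed by a divide-and-conquer on the dimension $m$ in the spirit of \cref{lem:kernel_basis}. Splitting the coordinate set into a top half $\range{\lceil m/2\rceil}$ and a bottom half, the leading diagonal block of the Hermite form is the Hermite form of the sublattice of $\operatorname{Span}_{\pring}(T)$ supported on the top coordinates --- recovered as the top part of $TN$, where $N$ is a minimal (shifted) kernel basis of the bottom rows $\submat{T}{\cdot}{\cdot}$ of $T$ --- while the trailing diagonal block is the Hermite form of the projection onto the bottom coordinates --- recovered as a minimal column basis of those same bottom rows. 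One then recurses on these two square submatrices, each of size about $m/2$, and concatenates the returned diagonals. Each level performs one kernel-basis and one column-basis computation on matrices of size $\bigO{m}$, which by the analyses of \cite{ZhouLabahnStorjohann12,JeannerodNeigerSchostVillard2017} cost $\bigO{m^{\expmm-1}n}$ up to logarithmic factors, provided the relevant degree sums stay in $\bigO{n}$; summing over the $\bigO{\log m}$ levels and folding all logarithmic factors into a single exponent gives $\bigO{m^{\expmm-1}n\log(n)^{\expoH}}$, with $n = \max(m,\sumTuple{\cdeg{T}})$ to also cover the low-degree regime.

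The main obstacle, and the bulk of the work in \cite{LabahnNeigerZhou2017}, is precisely the degree bookkeeping: one has to show that along the whole recursion --- and after each re-linearization --- the sums $\sumTuple{\cdeg{\cdot}}$ of the intermediate matrices (and of the shifts) remain $\bigO{n}$, since this is exactly what separates the target bound from a naive per-level estimate that would lose a factor $m$. This rests on the sharp bounds for shifted minimal kernel and column bases (the sum of shifted column degrees of the output being at most that of the input, as recalled after \cref{lem:kernel_basis}), on the fact that a triangular matrix obtained from $T$ by unimodular column operations has diagonal entries that are scalar multiples of the Hermite diagonal, and on a careful choice of shifts at each recursive call so that products such as $TN$ do not inflate in degree. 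I refer to \cite[Sec.\,3]{LabahnNeigerZhou2017} for the complete argument.
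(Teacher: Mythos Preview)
Your proposal is correct and follows essentially the same route as the paper, which does not prove the lemma but simply cites \cite[Prop.\,3.3]{LabahnNeigerZhou2017} and notes that the underlying algorithm relies on minimal kernel bases \cite{ZhouLabahnStorjohann12} and column bases \cite{ZhouLabahn2013}, with the unanalyzed logarithmic factors from the latter absorbed into the constant~\(\expoH\). Your sketch of the divide-and-conquer (kernel basis for the sublattice supported on one half of the coordinates, column basis for the projection onto the other half) and your emphasis on the degree bookkeeping as the crux are exactly the content of \cite[Sec.\,3]{LabahnNeigerZhou2017}; the only minor discrepancy is that the opening paragraph about recovering the Hermite form as an \(\shift\)-ordered weak Popov form for a large monomial shift, while correct, is not the device used in \cite{LabahnNeigerZhou2017} for the \emph{diagonal} (it is used there for the full form once the diagonal degrees are known), so that paragraph is extraneous to the claim at hand.
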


The algorithm as described in \cite[Algo.\,1]{LabahnNeigerZhou2017} works with
an equivalent lower triangular definition of the form. It involves two main
tools: kernel bases \cite{ZhouLabahnStorjohann12}, whose cost involves a
logarithmic factor of the form \(\log(n/m)^2 \loglog(n/m)\)
\cite[Lem.\,2.10]{NeigerPernet21}, and column bases \cite{ZhouLabahn2013}, for
which there is no analysis of the number of logarithmic factors in the
literature. In Lemma~\ref{lem:hermite_form} we introduce the latter as a constant
\(\expoH>0\), and this remains to be thoroughly analyzed.

%
%

\section{Krylov basis via polynomial kernel}
\label{sec:KrylovMatrix}

The key ingredients in our approach for computing Krylov bases transform the
problem into polynomial matrix operations. In this section, we present two
algorithms which involve matrix fraction descriptions as in
Eq.~(\ref{eqn:series_expansion_intro}) or equivalent formulations (see also
Appendix~\ref{sec:alternativeto}). In Section~\ref{sec:find_degrees}, given $A\in
\field^{n\times n}$ and \(U = [u_1 \cdots u_m]\) formed from $m$ vectors in
$\field^n$, Algorithm \texttt{MaxIndices} computes the maximal Krylov indices of
$\orb{A}{U}$. It exploits the fact that these indices coincide with the
degrees of certain minimal polynomial relations between the columns of
$[xI-A \;\; -U]$, allowing the use of polynomial
matrix manipulations. In Section~\ref{sec:compute_krylow_few}, we consider the
problem of computing the Krylov matrix \(\kry{A}{U}{d}\) for a given tuple
$d\in \NN^m$. Based on Eq.~(\ref{eqn:series_expansion_intro}), after computing a
kernel basis to obtain $S$ and~$T$, Algorithm \texttt{KrylovMatrix} proceeds with
a matrix series expansion, using the truncation orders given by $d$.  
Joining both algorithms, from \(A\) and \(U\) one obtains the maximal Krylov
basis of \(\orb{A}{U}\) using $O(n^{\expmm})$ field operations as soon
as the number \(m\) of vectors forming $U$ is in
\(\bigO{n/\log(n)^{\expofinal}}\). This will be done explicitly in Lines
\ref{step:MaxKrylovBasis:expofinal} to \ref{step:MaxKrylovBasis:smallm_return} of
Algorithm \texttt{MaxKrylovBasis} in Section~\ref{sec:KrylovBasis}.

%
%

\subsection{Computing the maximal indices}
\label{sec:find_degrees}

We begin in Section~\ref{sec:find_degrees:KG} with a brief reminder of a useful
characterization of the maximal Krylov indices in terms of linear algebra
over~$\field$. 
We then show how to reduce their computation to operations on polynomial
matrices, by explaining that they coincide with the degrees of certain
polynomials in a kernel basis (Lemma~\ref{lem:hermite_gives_degrees}).

Linear dependencies between vectors in Krylov subspaces are translated into
polynomial relations using the $\field[x]$-module structure of~$\field^n$ based
on $xu=Au$ for $u\in \field ^n$~\cite[Sec.\,3.10]{Jac85}. Some of these
dependencies in \(\orb{A}{U}\) are given by the coefficients of the entries of
triangular matrices $H\in \pmatspace{m}{m}$ such that
\begin{equation}
  \label{eq:matfracrev}
  (xI-A)^{-1}U=LH^{-1}
\end{equation}
with $L\in \pmatspace{n}{m}$~\cite[Sec.\,6.7.1, p.\,476]{Kailath80}. In
particular, if $L$ and $H$ are right coprime then the maximal indices of
\(\orb{A}{U}\) are given by the diagonal degrees of $H$. This is detailed
in Section~\ref{sec:find_degrees:H}, considering equivalently that $\trsp{[\trsp{L} \;\;
\trsp{H}]}$ is a kernel basis of $[xI-A \;\; -U]$.
The computation of the indices follows in Section~\ref{sec:find_degrees:algo}: it
combines the kernel and Hermite form tools discussed in  Section~\ref{sec:prelimin}.

\subsubsection{Keller-Gehrig's branching algorithm}
\label{sec:find_degrees:KG}

The most efficient algorithm so far for computing the lexicographically largest
tuple~$d$ such that \(\kry{A}{U}{d}\) is a Krylov basis is given in
\cite[Sec.\,5]{KeGe85}. For a general~$U$ (with $m\in O(n)$), its cost bound
$O(n^\expmm \log(n))$ is notably due to exponentiating $A$ in order to generate
Krylov iterates.
Keller-Gehrig's approach uses the following characterization of the maximal
indices $d$, which we will need (a proof can be found in
Section~\ref{sec:krylovstuff}): for $1\leq j\leq m$,~$d_j$ is the smallest integer such
that
\begin{equation}
  \label{eq:characd}
  A^{d_j}u_j \in \operatorname{Span}(u_j, Au_j, \ldots, A^{d_j-1} u_j) +\orb{A}{U_{*,1..j-1}}.
\end{equation}
A recursive construction allows independent and increasingly long Krylov chains
$u_j, Au_j, \ldots, A^{\ell-1} u_j, \ell \geq 0$ to be joined together in order
to reach the maximal basis of $\orb{A}{U}$, with independence guaranteed by
Gaussian elimination \cite[Sec.\,4]{KeGe85}. We will return to this in our
general Krylov basis method in Section~\ref{sec:KrylovBasis}.

\subsubsection{Links with the Hermite normal form}
\label{sec:find_degrees:H}

The following is to be compared with known techniques for  matrix fraction
descriptions~\cite[Sec.\,6.4.6, p.\,424]{Kailath80}, or to a formalism
occasionally used to efficiently compute matrix normal forms~[\citealp{Vil97};
\citealp[Chap.\,9]{Sto00}].

\begin{lemma}
  \label{lem:hermite_gives_degrees} 
  Given $A \in \matspace{n}{n}$ and $U \in \matspace{n}{m}$, let
  $\trsp{[\trsp{L} \;\; \trsp{H}]}$ be a kernel basis
  of $[xI-A \;\; -U]$ such that $H\in
  \pmatspace{m}{m}$ is upper triangular. The matrix $H$ is nonsingular and its
  diagonal degrees  are the maximal Krylov indices of \(\orb{A}{U}\).  
\end{lemma}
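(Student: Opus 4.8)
The plan is to work with the $\field[x]$-module structure on $\vecspace{n}$ given by $x\cdot v = Av$, and to translate the kernel-basis condition into linear-algebra statements over $\field$ about Krylov iterates. First I would establish that $H$ is nonsingular: the columns of $\trsp{[\trsp{L}\;\;\trsp{H}]}$ form a basis of the kernel of $[xI-A\;\;-U]$, and this kernel has rank $m$ since $xI-A$ is nonsingular of rank $n$ (so the kernel of the $n\times(n+m)$ matrix has rank $(n+m)-n=m$). If $H$ were singular, there would be a nonzero $\field[x]$-combination of its columns equal to zero, hence a nonzero kernel vector whose last $m$ coordinates vanish; but then the first $n$ coordinates $p\in\pvecspace{n}$ satisfy $(xI-A)p=0$, which forces $p=0$ by nonsingularity of $xI-A$, contradicting that the kernel basis columns are $\field[x]$-linearly independent.

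Next I would unwind what a single column of the kernel relation says. Write $H = [h_{ij}]$ upper triangular with diagonal degrees $\delta_j = \deg(h_{jj})$, and let column $j$ of $\trsp{[\trsp{L}\;\;\trsp{H}]}$ be $\trsp{[\trsp{L_{*,j}}\;\;\trsp{H_{*,j}}]}$. The relation $(xI-A)L_{*,j} = U H_{*,j}$ can be read in the module: evaluating polynomials in $x$ as polynomials in $A$ acting on vectors, and using that $(xI-A)$ acts invertibly is not available, but I would instead expand both sides. Because $H$ is upper triangular, $H_{*,j}$ has support in rows $1,\ldots,j$, so $U H_{*,j} = \sum_{i\le j} h_{ij}(A)\, u_i$, and the leading term in $u_j$ is $h_{jj}(A)u_j$ with $h_{jj}$ of degree $\delta_j$. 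The key algebraic fact to extract is that $h_{jj}(A)u_j$ lies in $\operatorname{Span}(u_j,Au_j,\ldots,A^{\delta_j-1}u_j) + \orb{A}{U_{*,1..j-1}}$: the lower-order terms of $h_{jj}(A)u_j$ account for the span of the first $\delta_j$ iterates of $u_j$, while the contributions $h_{ij}(A)u_i$ for $i<j$ land in $\orb{A}{U_{*,1..j-1}}$. This shows $d_j\le\delta_j$ where $d_j$ is the maximal index characterized by \cref{eq:characd}.

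For the reverse inequality $\delta_j\le d_j$, I would argue by minimality of the kernel basis or, more simply, by a counting/dimension argument: the orbit $\orb{A}{U}$ has dimension $|d| = d_1+\cdots+d_m$ (this is the content of \cref{eq:characd}: the maximal Krylov chains assemble into a basis), while on the other hand $ST^{-1}$ or equivalently $LH^{-1}$ describes $(xI-A)^{-1}U$, and coprimeness (right coprimeness of $L,H$, equivalent to $\trsp{[\trsp{L}\;\;\trsp{H}]}$ being a \emph{minimal}, hence in particular a saturated/basis, kernel) forces $\det(H)$ to have degree exactly $\dim\orb{A}{U}$ — this is the standard McMillan-degree identity for matrix fraction descriptions. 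Since $\det(H) = \prod_j h_{jj}$ up to a unit, $\sum_j\delta_j = \deg\det(H) = \dim\orb{A}{U} = \sum_j d_j$. Combined with $d_j\le\delta_j$ for every $j$, this yields $d_j=\delta_j$ for all $j$.

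The main obstacle I anticipate is making the McMillan-degree step rigorous in the module language rather than citing it from systems theory: one must show $\deg\det(H) = \dim_\field \orb{A}{U}$ using that $\trsp{[\trsp{L}\;\;\trsp{H}]}$ is a \emph{basis} of the kernel (not merely a spanning set of relations). A clean way is to observe that $\field[x]^n / (xI-A)\field[x]^n \cong \vecspace{n}$ as $\field$-vector spaces (the cokernel of $xI-A$ has dimension $n$, by the Smith form of $xI-A$), that the image of $U$ generates the submodule corresponding to $\orb{A}{U}$, and that $H$ is a presentation matrix of the quotient $\vecspace{n}/\orb{A}{U}$ restricted appropriately — so that the $\field$-dimension of the part of $\vecspace{n}$ captured equals $\deg\det(H)$. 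Alternatively, and perhaps more in keeping with the paper's earlier setup, I would deduce it directly from the truncated power-series expansion $LH^{-1} = \sum_k x^k A^k U$ of \cref{eq:matfracrev}: reading off that the columns of the Krylov matrix $\kry{A}{U}{\delta}$ with $\delta=\cdeg$-diagonal of $H$ are exactly a generating set forced to be a basis by the reducedness of the denominator, which ties the diagonal degrees to the maximal indices via the characterization \cref{eq:characd}. Either route requires care that the coprimeness/minimality hypothesis is used in full; that is where the real work lies, the rest being bookkeeping in the $\field[x]$-module $\vecspace{n}$.
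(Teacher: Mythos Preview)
Your proposal is correct, and for the nonsingularity of $H$ and the inequality $d_j\le\delta_j$ it matches the paper's argument essentially verbatim (the paper also extracts the relation $\sum_{i\le j} h_{ij}(A)u_i=0$ and invokes the minimality in \cref{eq:characd}).

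For the reverse inequality $\delta_j\le d_j$, the paper takes a different and more direct route than you. From the characterization \cref{eq:characd} it builds an upper triangular $P\in\pmatspace{m}{m}$ with diagonal degrees exactly $d$, observes that $(xI-A)^{-1}UP$ is then a polynomial matrix $R$, so that each column of $\trsp{[\trsp{R}\;\;\trsp{P}]}$ lies in the kernel; the basis property gives a factorization $P=HQ$ with $Q$ polynomial, and comparing diagonal degrees of the two triangular matrices yields $d_j\ge\delta_j$ entrywise. Your approach instead uses a global count: the module isomorphism $\pring^m/H\pring^m\cong\orb{A}{U}$ (which is exactly what the kernel-basis hypothesis encodes once you project to the last $m$ coordinates) gives $\sum_j\delta_j=\deg\det H=\dim_\field\orb{A}{U}=\sum_j d_j$, and together with $d_j\le\delta_j$ this forces equality. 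Both arguments use the basis hypothesis in an essential way (the paper to factor $P$ through $H$; you to identify the relation module with $H\pring^m$ rather than a submodule of it). The paper's argument is shorter and entirely elementary; yours is more structural and makes explicit that $H$ is a presentation matrix for $\orb{A}{U}$, which is a nice byproduct but requires the auxiliary fact $\dim_\field(\pring^m/H\pring^m)=\deg\det H$.
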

\begin{proof}
  We first note that, for an upper triangular $P = [p_{i,j}]_{i,j}$ in
  $\pmatspace{m}{m}$, $(xI-A)^{-1}UP$ is a polynomial matrix if and only if
  \begin{equation}
    \label{eq:ddsum}
    p_{jj}(A) u_j + \sum_{i=1}^{j-1} p_{ij}(A) u_i=0, 1\leq j \leq m.
  \end{equation}
  Indeed, using expansions in \(1/x\) with $(xI-A)^{-1}= \sum _{k\geq 0}
  A^k/x^{k+1}$, the coefficient of $1/x^{k+1}$ in the $j$th column of
  $(xI-A)^{-1}UP$ is
  \begin{equation}
    \label{eqn:coefficients_ddsum}
    \sum_{\ell=0}^{\deg(p_{jj})} A^{\ell+k} u_j p_{jj}^{(\ell)} + \sum_{i=1}^{j-1} \sum_{\ell=0}^{\deg(p_{ij})} A^{\ell+k} u_i p_{ij}^{(\ell)},
  \end{equation}
  for any \(k \ge 0\). Here $p_{ij}^{(\ell)}$ is the coefficient of degree
  $\ell$ of $p_{i,j}$, for $1\leq i \leq j \leq m$. The ``if'' direction
  follows: taking $k=0$, the (expansion of the) polynomial matrix
  $(xI-A)^{-1}UP$ has no term in $1/x$, hence \cref{eq:ddsum}. Conversely, if
  \cref{eq:ddsum} holds, multiplying it by $A^k$ shows that the expression in
  Eq.~(\ref{eqn:coefficients_ddsum}) is zero, for any $k\geq 0$; hence the terms in
  $1/x^{k+1}$ are zero in the expansion of $(xI-A)^{-1}UP$, which is thus a
  polynomial matrix.

  Now let $d$ be the maximal indices, and let $d'$ be the tuple of the diagonal
  degrees of~$H$. Observe that $H$ is nonsingular: since $L=(xI-A)^{-1}UH$,
  $Hu=0$ for $u\neq 0$ would lead to $\trsp{[\trsp{L} \;\; \trsp{H}]}u=0$,
  which is impossible by definition of bases. Thus \(d'\in\NN^m\).

  To conclude, we first show that $d$ is greater than or equal to $d'$, and
  then the converse. The characterization of $d$ given by \cref{eq:characd}
  leads to linear dependencies as in \cref{eq:ddsum}, hence to~$P$ with
  diagonal degrees given by $d$. Since $(xI-A)^{-1}UP$ is a polynomial matrix,
  say~$R$, the columns of $\trsp{[\trsp{R} \;\; \trsp{P}]}$ are in the kernel
  of $[xI-A \;\; -U]$, which is generated by $\trsp{[\trsp{L} \;\; \trsp{H}]}$,
  hence we must have $P=HQ$ for some polynomial matrix $Q$.  Since both~$P$ and
  $H$ are triangular we conclude that $d \ge d'$ entry-wise. On the other hand,
  considering $H$, we get dependencies as in \cref{eq:ddsum} with polynomials
  given by the entries of $H$. Thanks to the fact that the $d_j$'s reflect the
  shortest dependencies (see Section~\ref{sec:find_degrees:KG}), we get that $d' \ge
  d$ entry-wise.
\end{proof}

\subsubsection{Algorithm \texttt{MaxIndices}}
\label{sec:find_degrees:algo}

Let $V\in \pmatspace{m}{m}$ be unimodular such that $H=TV$ is in Hermite form.
The matrix $\trsp{[\trsp{(SV)} \;\; \trsp{H}]} = \trsp{[\trsp{S} \;\;
\trsp{T}]}V$ is also a kernel basis of $[xI-A \;\; -U]$, and the correctness of
the algorithm follows from Lemma~\ref{lem:hermite_gives_degrees} with \(L = SV\).

For the complexity bound, the kernel basis at Line~\ref{step:chains:kernel} costs
$O(n^{\expmm})$ operations in \(\field\) when $m\in O(n)$, according to
Lemma~\ref{lem:kernel_basis}, which also ensures \(\sumTuple{\cdeg{\mat{T}}} \le
n\).  From Lemma~\ref{lem:hermite_gives_degrees}, the resulting matrix $T$ is
nonsingular because its Hermite form is. Lemma~\ref{lem:hermite_form} states that
Line~\ref{step:chains:hermite} can be achieved using $O(m^{\expmm-1} n
\log(n)^\expoH)$ operations in $\field$. In particular, as soon as the number
\(m\) of columns of $U$ is in \(\bigO{n/\log(n)^{{\expoH}/({\expmm-1})}}\), the
overall complexity bound is $O(n^{\expmm})$.

\begin{algorithm}[ht]
  \algoCaptionLabel{MaxIndices}{A,U}
  \begin{algorithmic}[1]
    \Require \(\mat{A} \in \matspace{n}{n}, \mat{U} \in \matspace{n}{m}\)
    \Ensure the tuple in $\NN^m$ of the maximal indices of \(\orb{A}{U}\) 

    \State\label{step:chains:kernel}%
    \CommentLine{minimal kernel basis \cite[Algo.\,1]{ZhouLabahnStorjohann12}}
    \Statex \([\begin{smallmatrix} S \\ T \end{smallmatrix}] \gets \textproc{MinimalKernelBasis}([xI-A \;\; -U])\)
    \Statex  where \(S \in \pmatspace{n}{m}\) and \(T \in \pmatspace{m}{m}\) 

    \State\label{step:chains:hermite}%
    \CommentLine{diagonal entries of the Hermite normal form}
    \Statex \((h_{11},\ldots, h_{mm}) \in \pring^m \gets \textproc{HermiteDiagonal}(T)\) \Comment{\cite[Alg.\,1]{LabahnNeigerZhou2017}}

    \State \Return \((\deg(h_{11}),\ldots,\deg(h_{mm}))\)
  \end{algorithmic}
\end{algorithm}

\subsection{Computing a Krylov matrix}
\label{sec:compute_krylow_few}

Hereafter, for a column polynomial vector \(v = \sum_j v_j x^j \in
\pvecspace{m}\) and an integer \(d\in\NN\), we denote its truncation at order
\(d\) by
\[
  v \rem x^d  = v_0 + v_1 x + \cdots + v_{d-1} x^{d-1} \in
  \pvecspace{m}
\]
and the corresponding matrix of coefficients by
\[
  \coeffs{v}{d}
  = [v_0 \;\; v_1 \;\; \cdots \;\; v_{d-1}] \in \matspace{m}{d}.
\]
We extend this column-wise for \(M \in \pmatspace{m}{k}\) and \(d = (d_j)_j\in
\NN^k\), that is, \(M \rem x^d = [\matcol{M}{1} \rem x^{d_1} \;\;\cdots\;\;
\matcol{M}{k} \rem x^{d_k}]\).

Algorithm \texttt{KrylovMatrix} uses polynomial matrix subroutines at
Lines~\ref{step:kryloveval:inverse} and \ref{step:kryloveval:matmul}. They are handled in
detail in Sections~\ref{sec:polmat:TruncatedInverse} and \ref{sec:polmat:TruncatedProduct}.

\begin{lemma}
  \label{lem:KrylovMatrix}
  Algorithm \texttt{KrylovMatrix} is correct. If \(m\) and \(\sumTuple{d}\) are
  both in \(\bigO{n}\), it uses \(\bigO{n^\expmm + m^{\expmm-2} n^2
  \log(n)^4}\) operations in \(\field\).
\end{lemma}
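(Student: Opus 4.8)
The plan is to follow the structure of Algorithm \texttt{KrylovMatrix} as sketched in the text: first establish correctness from the identity $S(x)T(x)^{-1} = (I-xA)^{-1}U = \sum_{k\ge 0} x^k A^k U$, and then bound the cost of its two nontrivial subroutines (the truncated inverse/lift at Line~\ref{step:kryloveval:inverse} and the truncated product at Line~\ref{step:kryloveval:matmul}), together with the kernel basis call. For correctness, I would argue that the minimal kernel basis $\trsp{[\trsp{S}\;\;\trsp{T}]}$ of $[xI-A\;\;-U]$ — equivalently of $[I-xA\;\;-U]$ after the usual reversal — yields a matrix fraction $ST^{-1}$ whose power series expansion is exactly $\sum_k x^k A^k U$; hence the coefficient matrix of $\matcol{(ST^{-1})}{j} \rem x^{d_j}$ is precisely $[u_j\;\;Au_j\;\;\cdots\;\;A^{d_j-1}u_j] = \kry{A}{u_j}{d_j}$. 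Concatenating over $j$ gives $\kry{A}{U}{d}$, which is the claimed output. The one point requiring a little care is that $T$ is nonsingular and that $S$, $T$ are right coprime, so that $ST^{-1}$ really expands to the stated series with no spurious denominator; this follows from Lemma~\ref{lem:hermite_gives_degrees} (nonsingularity of $T$) and from the minimality of the kernel basis. Also $\sumTuple{\cdeg{T}}\le n$ and $\sumTuple{\cdeg{S}}\le n$ by Lemma~\ref{lem:kernel_basis}.

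For the complexity, I would split into three contributions. First, the minimal kernel basis on input $[xI-A\;\;-U]$, a matrix of degree $\le 1$, dimensions $n\times(n+m)$, rank $n$, with $m\in\bigO{n}$: by Lemma~\ref{lem:kernel_basis} this costs $\bigO{n^\expmm}$ and returns $\trsp{[\trsp{S}\;\;\trsp{T}]}$ with $\sumTuple{\cdeg{T}}\le n$ and hence also $\sumTuple{\cdeg{S}}\le n$ (the $S$-part degrees are governed by the same bound). Second, the truncated inversion computing $T^{-1} \rem x^{D}$ (or the relevant high-order lifting of it) where $D = \max_j d_j \le \sumTuple{d} \in \bigO{n}$: here I would invoke the unbalanced truncated inverse result of Section~\ref{sec:polmat:TruncatedInverse}, whose cost for an $m\times m$ matrix of average column degree $n/m$ and the required truncation order is $\bigO{m^{\expmm-2}n^2\log(n)^4}$ — this is where the $\log(n)^4$ and the $m^{\expmm-2}n^2$ term come from, via high-order lifting combined with the partial linearization of \cite{GuSaStVa12}. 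Third, forming the relevant coefficients of $S\cdot(T^{-1}\rem x^{d})$ column-truncated at the orders $d_j$: by the unbalanced truncated multiplication of Section~\ref{sec:polmat:TruncatedProduct}, with $\sumTuple{\cdeg{S}}\le n$, $\sumTuple{d}\in\bigO{n}$ and $m\in\bigO{n}$, this is again within $\bigO{n^\expmm + m^{\expmm-2}n^2\log(n)^4}$. Summing the three gives the stated bound.

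The main obstacle I expect is the second and third steps: matching the precise cost estimates of the ``unbalanced'' polynomial-matrix subroutines to the hypotheses $m\in\bigO{n}$, $\sumTuple{d}\in\bigO{n}$. One has to check that $\sumTuple{\cdeg{T}}\le n$ makes the average column degree $\bigO{n/m}$ so that the partial-linearization trick applies cleanly, that the truncation orders $d_j$ (which may be wildly unbalanced) are handled by the column-wise truncation rather than a uniform order $\max_j d_j$ — so that the total work is controlled by $\sumTuple{d}$ and not by $m\max_j d_j$ — and that the high-order lifting is carried only to the order actually needed, namely $\bigO{n/m}$ steps each costing a multiplication of $m\times m$ matrices of degree $\bigO{n/m}$. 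Once those bookkeeping facts are in place (and they are exactly what Sections~\ref{sec:polmat:TruncatedInverse} and \ref{sec:polmat:TruncatedProduct} are designed to supply), the $\bigO{n^\expmm + m^{\expmm-2}n^2\log(n)^4}$ bound follows by adding up, and the $\bigO{n^\expmm}$ regime for $m\in\bigO{n/\log(n)^\expofinal}$ is an immediate corollary since then $m^{\expmm-2}n^2\log(n)^4 \le n^{\expmm-2}\log(n)^{4-(\expmm-2)\expofinal}\cdot n^2 = \bigO{n^\expmm}$ for $\expofinal$ large enough relative to the fixed constant $4/(\expmm-2)$.
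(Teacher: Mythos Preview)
Your overall plan matches the paper's proof: split into kernel-basis cost, truncated-inverse cost, and truncated-product cost, invoke Propositions~\ref{prop:TruncatedInverse} and~\ref{prop:TruncatedProduct} with the degree bounds $\sumTuple{\cdeg{S}},\sumTuple{\cdeg{T}}\le n$ from Lemma~\ref{lem:kernel_basis}, and simplify under $m,\sumTuple{d}\in\bigO{n}$. The complexity accounting, while sketchy, lands on the right bound.

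There is however a genuine gap in your correctness argument. You appeal to Lemma~\ref{lem:hermite_gives_degrees} for the nonsingularity of $T$, but that lemma is about kernel bases of $[xI-A\;\;-U]$, whereas Algorithm~\texttt{KrylovMatrix} computes a minimal kernel basis of $[I-xA\;\;-U]$; these are distinct kernels, and the ``usual reversal'' you allude to is not a trivial identification (Appendix~\ref{sec:alternativeto} shows that minimality can even fail under reversal when $A$ is singular). More importantly, nonsingularity of $T$ as a polynomial matrix is \emph{not} what is needed: for Line~\ref{step:kryloveval:inverse} to make sense you must have $T(0)$ invertible in $\matspace{m}{m}$, so that $T^{-1}$ admits a power series expansion at $x=0$ and \textproc{TruncatedInverse} applies (its input requirement is precisely ``$P(0)$ invertible''). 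Right-coprimeness of $S,T$ does not give this either. The paper supplies this via the separate Lemma~\ref{lem:kryloveval:constant_invertible}, proved by completing the kernel basis to a basis of $\pring^{m+n}$, evaluating at $0$, and using that $[I\;\;-U]B(0)=0$ forces $S(0)v=0$ whenever $T(0)v=0$. Without this step your argument does not establish that the algorithm is well-defined, let alone correct.
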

\begin{proof}
  Knowing by Lemma~\ref{lem:kryloveval:constant_invertible} that \(T(0)\) is
  invertible (hence \(T\) is nonsingular), the correctness follows from the
  series expansion
  \[
    ST^{-1} = (I - x A)^{-1} U = \sum_{k\ge 0} x^k A^k U
  \]
  mentioned in Eq.~(\ref{eqn:series_expansion_intro}); here the first equality comes
  from the kernel relation $[I-xA \;\; -U] \trsp{[\trsp{S} \;\; \trsp{T}]}=0$,
  by construction in Line~\ref{step:kryloveval:kernel}.  Since \(T(0)\) is
  invertible,
  Lines~\ref{step:kryloveval:inverse} and \ref{step:kryloveval:matmul} compute \(P = SQ \rem
  x^d = ST^{-1} \rem x^d\), according to
  \cref{prop:TruncatedInverse,prop:TruncatedProduct}. The above equation yields
  \(\coeffs{\matcol{P}{j}}{d_j} = [u_j \;\; Au_j \;\; \cdots \;\;
  A^{d_j-1}u_j]\), hence the correctness of the output formed at
  Line~\ref{step:kryloveval:expand}. 

  We turn to the complexity analysis. We assume \(m \le \sumTuple{d}\) without
  loss of generality: the columns of \(U\) corresponding to indices \(j\) with
  \(d_j=0\) could simply be ignored in Algorithm \texttt{KrylovMatrix}, reducing
  to a case where all entries of \(d\) are strictly positive.

  By Lemma~\ref{lem:kernel_basis}, Line~\ref{step:kryloveval:kernel} uses
  \(\bigO{n^\expmm}\) operations in \(\field\) and ensures
  \(\sumTuple{\cdeg{K}} \le n\). In particular, \(\sumTuple{\cdeg{S}} \le n\)
  and \(\sumTuple{\cdeg{T}} \le n\).

  The latter bound ensures that the generic determinantal degree \(\Delta(T)\)
  is in \(\bigO{n}\) (see Lemma~\ref{lem:partial_linearization}), hence we can use
  the second cost bound in \cref{prop:TruncatedInverse}:
Line~\ref{step:kryloveval:inverse} computes \(Q = T^{-1} \rem x^d\) using 
  \begin{equation}
    \label{eqn:comp_trunc_inv}
    \bigO{m^\expmm \timepm{\frac{n}{m}}
      \left( \log(n) + \left\lceil \frac{\sumTuple{d}}{n} \right\rceil \log(m) \log(\sumTuple{d}) \right)
    }
  \end{equation}
  field operations. The second cost bound in \cref{prop:TruncatedProduct}
  states that Line~\ref{step:kryloveval:matmul} computes \(SQ \rem x^d\) using
  \(\bigO{m^{\expmm-2} n \timepm{n+\sumTuple{d}}}\) operations in \(\field\).
  Summing the latter bound with that in Eq.~(\ref{eqn:comp_trunc_inv}) yields a cost
  bound for Algorithm \texttt{KrylovMatrix} when the only assumption on \(d\) is
  \(m \le \sumTuple{d}\). Now, assuming further \(\sumTuple{d} \in \bigO{n}\),
  the latter bound becomes \(\bigO{m^{\expmm-2} n \timepm{n}}\), whereas the
  one in Eq.~(\ref{eqn:comp_trunc_inv}) simplifies as $\bigO{m^\expmm
  \timepm{\frac{n}{m}} \log(n) \log(m)}$. The claimed cost bound then follows
  from \(\timepm{n} \in \bigO{n \log(n) \loglog(n)}\).
\end{proof}

\renewcommand{\thealgorithm}{2}
\begin{algorithm}[ht]
  \algoCaptionLabel{KrylovMatrix}{A,U,d}
  \begin{algorithmic}[1]
    \Require \(\mat{A} \in \matspace{n}{n}\), \(\mat{U} \in \matspace{n}{m}\), \(d = (d_1,\ldots,d_m) \in \NN^m\)
    \Ensure the Krylov matrix \(\kry{A}{U}{d} \in \matspace{n}{(d_1 + \cdots + d_m)}\)

  \State \label{step:kryloveval:kernel}
    \CommentLine{minimal kernel basis \cite[Algo.\,1]{ZhouLabahnStorjohann12}}
  \Statex \([\begin{smallmatrix} S \\ T \end{smallmatrix}] \gets \textproc{MinimalKernelBasis}([I-xA \;\; -U])\)
    \Statex where \(S \in \pmatspace{n}{m}\) and \(T \in \pmatspace{m}{m}\) 

  \State \label{step:kryloveval:inverse}
    \CommentLine{column-truncated inverse \(T^{-1} \rem x^d\), detailed in Section~\ref{sec:polmat:TruncatedInverse}}
    \Statex \(Q \in \pmatspace{m}{m} \gets \Call{algo:TruncatedInverse}{T,d}\)

  \State \label{step:kryloveval:matmul}
    \CommentLine{column-truncated product \(S Q \rem x^d\), detailed in Section~\ref{sec:polmat:TruncatedProduct}}
  \Statex \(P \in \pmatspace{n}{m} \gets \Call{algo:TruncatedProduct}{S,Q,d}\)

  \State \label{step:kryloveval:expand}
    \CommentLine{linearize columns of \(P\) into a constant matrix and return}
  \Statex \Return \([\coeffs{\matcol{P}{1}}{d_1} \; | \; \cdots \; | \; \coeffs{\matcol{P}{m}}{d_m}] \in \matspace{n}{\sumTuple{d}}\)
  \end{algorithmic}
\end{algorithm}

\begin{lemma}
  \label{lem:kryloveval:constant_invertible}
  For \(T\) as in Line~\ref{step:kryloveval:kernel} of \cref{algo:KrylovMatrix},
  \(T(0)\) is invertible.
\end{lemma}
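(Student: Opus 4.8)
The plan is to prove the statement by contradiction. The only facts I will use are that the columns of $K := \trsp{[\trsp{S} \;\; \trsp{T}]} \in \pmatspace{(m+n)}{m}$ form a \emph{basis} of the $\pring$-module $\ker([I-xA \;\; -U])$ --- so in particular they are $\pring$-linearly independent and every kernel vector is an $\pring$-combination of them --- together with the relation $(I-xA)S = UT$ read off from $[I-xA \;\; -U]K = 0$ (note that a minimal kernel basis, as returned by \textproc{MinimalKernelBasis}, is in particular a kernel basis).

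Assume $T(0)$ is singular and pick $w \in \field^m \setminus \{0\}$ with $T(0)w = 0$. Evaluating $(I-xA)S = UT$ at $x=0$ gives $S(0) = U\,T(0)$, hence $S(0)w = U\,T(0)\,w = 0$, so $K(0)w = \trsp{[\trsp{(S(0)w)} \;\; \trsp{(T(0)w)}]} = 0$. Therefore the polynomial vector $p := Kw$ vanishes at $x=0$, which lets us write $p = x\,\tilde p$ for some $\tilde p \in \pvecspace{m+n}$.

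The key step is to show that $\tilde p$ lies in $\ker([I-xA \;\; -U])$ as well. Indeed $[I-xA \;\; -U]p = [I-xA \;\; -U]Kw = 0$, so $x\,[I-xA \;\; -U]\tilde p = 0$, and since $\pring$ is an integral domain this forces $[I-xA \;\; -U]\tilde p = 0$. By the basis property, $\tilde p = K\tilde q$ for some $\tilde q \in \pvecspace{m}$, whence $Kw = p = x\tilde p = K(x\tilde q)$; as $K$ has full column rank over $\pring$, we conclude $w = x\tilde q$. This is impossible, because $w$ is a nonzero constant vector while $x\tilde q$ has zero constant coefficient. The contradiction shows that $T(0)$ is invertible.

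I do not foresee a real obstacle; the one point that needs attention is the passage $p = x\tilde p \mapsto \tilde p \in \ker([I-xA \;\; -U])$, i.e.\ verifying that cancelling $x$ keeps $\tilde p$ a bona fide polynomial kernel vector --- which is exactly where integrality of $\pring$ is used. Note that minimality or column-reducedness of the kernel basis plays no role: only the basis property is used, so the claim holds for the output of \textproc{MinimalKernelBasis} simply because it is a kernel basis. Phrased in terms of the paper's matrix-fraction viewpoint, this says that the $\pring$-module $\{q \in \pvecspace{m} : (I-xA)^{-1}Uq \in \pvecspace{n}\}$, of which $T$ is a basis matrix, is saturated with respect to $x$.
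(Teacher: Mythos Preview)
Your proof is correct, but it differs from the paper's argument. The paper invokes the fact that a kernel basis can be completed into a basis of the free module \(\pring^{m+n}\) (citing Lang): there exists \(C \in \pmatspace{(m+n)}{n}\) with \(\det([K\;\;C])\) a nonzero constant, hence \([K(0)\;\;C(0)]\) is invertible over \(\field\); then \(T(0)v=0\Rightarrow S(0)v=0\Rightarrow K(0)v=0\Rightarrow v=0\) follows immediately from the invertibility of \([K(0)\;\;C(0)]\).

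Your route avoids this external structural result entirely. Instead of showing directly that \(K(0)\) has full column rank via a unimodular completion, you argue by saturation: if \(K(0)w=0\), factor \(Kw=x\tilde p\), observe \(\tilde p\) is again a kernel element, and use the basis property twice (spanning, then linear independence) to reach \(w=x\tilde q\). This is more self-contained and uses nothing beyond the definition of a module basis and integrality of \(\pring\); the paper's version, on the other hand, is shorter once the completion lemma is granted, and it makes the underlying reason --- that \(K(0)\) already has full column rank --- more transparent.
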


\begin{proof}
  Let $B=\trsp{[\trsp{S} \;\; \trsp{T}]}$. As a kernel basis, \(B\) can be
  completed into a basis of \(\pring^{m+n}\) \cite[Lem.\,7.4, p148]{Lang2002}:
  there is \(C \in \pmatspace{(m+n)}{n}\) such that \(\det([B \;\; C]) =
  \det([B(0) \;\; C(0)]) = 1\). We conclude by deducing that any vector \(v \in
  \vecspace{m}\) such that \(\mat{T}(0) v = 0\) must be zero: from \([I \;\;
  -U] B(0) = 0\) we get \(S(0) v = 0\), hence \(B(0) v = 0\), which implies
  \(v=0\) since \([B(0) \;\; C(0)]\) is invertible.
\end{proof}
\enlargethispage{12pt}

\section{Preprocessing of small indices}
\label{sec:KrylovBasis}

In this section we consider the case where the number \(m\) of vectors to be
iterated can be large. Say for instance \(m=\Theta(n)\): then, terms with
logarithmic factors in the costs of~\cref{algo:KrylovMatrix,algo:MaxIndices},
such as \(O({m^{\expmm-1}n \log(n)^\expoH})\) or
\(O({m^{\expmm-2}n^2\log(n)^4})\), are not negligible anymore compared to the
cost \(\bigO{n^\expmm}\) of the kernel basis computations.

Algorithm \texttt{MaxKrylovBasis} computes the maximal Krylov basis of
$\orb{A}{U}$. It involves a preprocessing phase based on the Keller-Gehrig
branching algorithm~\cite[Thm~5.1]{KeGe85}, which is terminated after a small
number \(\DD\) of iterations. This ensures that after this phase, no more than
\(m=n/2^\DD\) vectors need to be iterated further, which can then be performed
by \cref{algo:MaxIndices,algo:KrylovMatrix}. Specifically, setting
\(\ell=\lceil c\loglog(n)\rceil\) with \(\expofinal = \max(4/(\expmm-2),
c_1/(\expmm-1))\) yields \(m \approx n/\log(n)^c\), ensuring that
\cref{algo:MaxIndices,algo:KrylovMatrix} run in~\(\bigO{n^\expmm}\).

If $m$ is sufficiently small from the start, the preprocessing phase is
skipped: a direct call to the latter algorithms is made. Otherwise, a
\textbf{for } loop (Line~\ref{step:preloop} of Algorithm \texttt{MaxKrylovBasis})
performs $\DD$ loop iterations akin to those used in \cite[Sec.\,5]{KeGe85}. At
the end of this loop, the indices \(J\) of the vectors requiring further
iterations are identified (Line~\ref{step:longer}), and processed via
\cref{algo:MaxIndices,algo:KrylovMatrix} (Line~\ref{step:HCL2} and \ref{step:kmat}). Finally,
both temporary sequences of iterates, the ``long'' ones (indices in $J$) and
``short'' ones (indices not in $J$), are merged via Gaussian elimination and
the maximal basis is obtained.

\renewcommand{\thealgorithm}{3}
\begin{algorithm}[htb]
  \algoCaptionLabel{MaxKrylovBasis}{A,U}
  \begin{algorithmic}[1]
    \Require \(A\in \field^{n\times n}\), 
    \(U=[u_1 \;\; \cdots \;\; u_m] \in \field^{n\times m}\)
    \Ensure the maximal Krylov basis of $\orb{A}{U}$

    \State \(\expofinal\gets \max(4/(\expmm-2),c_1/(\expmm-1))\);
            \(\thres \gets \log_2(n)^c\)
        \label{step:MaxKrylovBasis:expofinal}
    \If{\(m\leq n/\thres\)}
      \State \(d \gets \Call{algo:MaxIndices}{A,U}\)
          \label{step:HCL1} 
      \State \(K\gets \Call{algo:KrylovMatrix}{A,U,d}\); \Return \(K\)
          \label{step:MaxKrylovBasis:smallm_return}
    \EndIf

    \State \CommentLine{Preprocessing phase in a Keller-Gehrig fashion \cite{KeGe85}}
    \State \(\DD \gets \lceil \log_2(t)\rceil\)
    \State \(V^{(0)} \gets U; \ \delta \gets (1,\ldots, 1)\in \ZZ^m\)
    \State \(B \gets A\)
    \For{\(i\gets 0, \ldots, \DD-1\)} \label{step:preloop}
      \State Write \(V^{(i)} = [V^{(i)}_1 \;\; V^{(i)}_2 \;\;\cdots \;\; V^{(i)}_m]\) where 
          \label{step:mkb:invariant} 
          \StateX{2} \(V^{(i)}_j = \kry{A}{U_{*,j}}{\delta_j} \in \field^{n\times \delta_j}\)
      \State \(J =\{j_1<\cdots<j_s\} \gets \{j\in \range{m} \mid \delta_{j}=2^i\}\)

      \State \([W^{(i)}_{j_1} \;\; \cdots \;\; W^{(i)}_{j_s}] \gets
                B [V^{(i)}_{j_1} \;\; \cdots \;\; V^{(i)}_{j_s}]\)   

      \State \(W_j^{(i)}\gets []\in \field^{n\times 0} \) for \(j\notin J\)
      \State \(Z \gets [V^{(i)}_1 \;\; W^{(i)}_1 \;\; \cdots \;\; V^{(i)}_m \;\; W^{(i)}_m]\)
      \State \(\mathcal{C}\gets \text{ColRankProfile}(Z)\)
            \label{step:CRP1} 
      \State \(\delta\gets (\delta_1,\ldots,\delta_m)\) such that \(\delta_j\) is maximal with 
            \label{step:maxdelta} 
            \StateX{2} \(\kry{A}{u_j}{\delta_j} = Z_{*,\mathcal{C} \cap \{b..b+\delta_j-1\}}\) for some \(b\)
      \State  \(V^{(i+1)} \gets [V_1^{(i+1)} \;\; \cdots  \;\; V_m^{(i+1)}]\) where
            \StateX{2} each \(V_j^{(i+1)}\gets\kry{A}{u_j}{\delta_j}\) is copied from \(Z\)
      \State \(B\gets B^2\)
    \EndFor
    \State \CommentLine{Here, \(V^{(\DD)}=\kry{A}{U}{\delta}\) and \(\delta=(\delta_1,\ldots, \delta_m) \in \{0,\ldots,2^\DD\}^m\) is 
        \label{step:loopexit} 
        \StateX{2} lexicographically maximal such that \(\kry{A}{U}{\delta}\) has full rank
    }
    \State \CommentLine{Further iterations for selected vectors, using polynomial matrices}
    \State \(J =\{j_1<\cdots<j_s\} \gets \{j\in \range{m}, \delta_j=2^\DD\}\)
          \label{step:longer}
    \State \(d\gets \Call{algo:MaxIndices}{A,\matcols{U}{J}}\)
          \label{step:HCL2}
    \State \([K_{j_1} \;\; \cdots \;\; K_{j_s}] \gets \Call{algo:KrylovMatrix}{A,\matcols{U}{J},d}\)
          \label{step:kmat}
    \State \CommentLine{Final merge}
    \State \(K_j \gets V_j^{(\DD)}\) for \(j\notin J\);
            \(\;\;K\gets [K_1 \;\; \cdots \;\; K_m]\) \label{step:beforemerge}
    \State \Return \(\matcols{K}{\text{ColRankProfile}(K)}\)
          \label{step:return} 
\end{algorithmic}
\end{algorithm}

\begin{theorem}
  \label{thm:maxkrylovbasis}
  Algorithm \texttt{MaxKrylovBasis} is correct. If \(m=\bigO{n}\), it uses
  \(\bigO{n^\expmm\loglog(n)}\) operations in \(\field\), and if
  \(m=\bigO{n/\log(n)^{\expofinal}}\) for the constant \(\expofinal > 0\) in
  Line~\ref{step:MaxKrylovBasis:expofinal}, it uses \(\bigO{n^\expmm}\) operations
  in \(\field\).
\end{theorem}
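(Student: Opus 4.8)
The plan is to split the analysis along the branch taken by Algorithm \texttt{MaxKrylovBasis}, and to prove correctness and the cost bound separately. For correctness, I would first handle the ``small $m$'' branch (Lines \ref{step:HCL1}--\ref{step:MaxKrylovBasis:smallm_return}): here \texttt{MaxIndices} returns the maximal indices $d$ of $\orb{A}{U}$ by \cref{lem:hermite_gives_degrees} (applied via \texttt{HermiteDiagonal}, which yields the diagonal degrees of a triangular form unimodularly equivalent to $T$), and then \texttt{KrylovMatrix} is correct by \cref{lem:KrylovMatrix}; since the maximal-indices tuple gives a Krylov \emph{basis}, the returned $K$ has full column rank and spans $\orb{A}{U}$, so it is the maximal Krylov basis. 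For the general branch, I would establish the loop invariant asserted in Lines \ref{step:mkb:invariant} and \ref{step:loopexit}: after iteration $i$, each block $V^{(i)}_j$ equals $\kry{A}{u_j}{\delta_j}$ with $\delta_j\le 2^i$, and $\delta$ is the lexicographically largest tuple with entries bounded by $2^i$ such that $\kry{A}{U}{\delta}$ has full rank. This is exactly the Keller--Gehrig branching invariant of \cite[Sec.\,5]{KeGe85}: appending $B V^{(i)}_j = A^{2^i} V^{(i)}_j$ doubles the available iterates for the still-growing chains, and the column rank profile of the interleaved matrix $Z$ selects, greedily and left-to-right, the longest prefix of each chain that stays independent of everything to its left --- which by the characterization \cref{eq:characd} is precisely the maximal index capped at $2^{i+1}$. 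After $\DD=\lceil\log_2 t\rceil$ iterations, any $j\in J$ (those with $\delta_j=2^\DD$) still has $d_j\ge 2^\DD$, and the true maximal index of $u_j$ within $\orb{A}{U}$ is computed by \texttt{MaxIndices} on $U_{*,J}$ --- here one must check that the maximal index of $u_j$ relative to $\orb{A}{U}$ equals its maximal index relative to $\orb{A}{U_{*,J}}$, which holds because the short chains ($j\notin J$) have already been saturated (their index is $<2^\DD$, hence final), so they contribute nothing new beyond what the long chains already reach. Finally, merging the long blocks $K_j$ ($j\in J$) with the saturated short blocks $V^{(\DD)}_j$ ($j\notin J$) and taking the column rank profile (Line \ref{step:return}) extracts the maximal Krylov basis, by the same greedy left-to-right argument.

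For the complexity bound in the general case $m=\bigO{n}$, I would count the preprocessing loop and the polynomial-matrix tail. The loop runs $\DD=\lceil\log_2 t\rceil = \bigO{\loglog(n)}$ iterations (since $t=\log_2(n)^c$). Each iteration costs $\bigO{n^\expmm}$: the product $B[V^{(i)}_{j_1}\cdots]$ is a multiplication of $n\times n$ matrices (the total width of the blocks is $\le n$ since $Z$ has $\le 2n$ columns and its rank is $\le n$), the squaring $B\gets B^2$ is one $n\times n$ product, and the column rank profile of the $n\times \bigO{n}$ matrix $Z$ costs $\bigO{n^\expmm}$ by fast Gaussian elimination \cite[Sec.\,4]{KeGe85}. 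Hence the loop totals $\bigO{n^\expmm\loglog(n)}$. For the tail: after the loop, $\#J = s \le n/2^\DD \le n/t = n/\log_2(n)^c$ (because the $s$ long chains each contributed $2^\DD$ independent columns to a rank-$\le n$ space). With $c=\expofinal=\max(4/(\expmm-2),c_1/(\expmm-1))$, the choice of $\expofinal$ makes both $s^{\expmm-1}n\log(n)^{c_1}$ (the \texttt{HermiteDiagonal} cost from \cref{lem:hermite_form} with parameters $m\mapsto s$, $n\mapsto n$) and $s^{\expmm-2}n^2\log(n)^4$ (the extra term in \cref{lem:KrylovMatrix}) bounded by $\bigO{n^\expmm}$: e.g. $s^{\expmm-2}n^2\log(n)^4 \le (n/\log(n)^{\expofinal})^{\expmm-2} n^2 \log(n)^4 = n^\expmm \log(n)^{4-\expofinal(\expmm-2)} \le n^\expmm$, and similarly for the Hermite term. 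So \texttt{MaxIndices} and \texttt{KrylovMatrix} on $U_{*,J}$ both cost $\bigO{n^\expmm}$ (using $\sumTuple{d}\le n$, which holds since $d$ is bounded by the maximal indices of an $n$-dimensional orbit). The final merge is one column-rank-profile computation on an $n\times\bigO{n}$ matrix, $\bigO{n^\expmm}$. Adding up: $\bigO{n^\expmm\loglog(n)}$ in general, and $\bigO{n^\expmm}$ when $m\le n/\log(n)^{\expofinal}$ from the start, since then the preprocessing loop is skipped entirely and only the $\bigO{n^\expmm}$ tail runs.

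The main obstacle I anticipate is the correctness of the loop invariant and, in particular, the claim that restricting to $U_{*,J}$ for the long-chain computation does not lose any iterates --- i.e. that the maximal index of $u_j$ ($j\in J$) in $\orb{A}{U}$ coincides with that in $\orb{A}{U_{*,J}}$, and symmetrically that the short chains $V^{(\DD)}_j$ are genuinely complete after $\DD$ steps. This requires care with the ordering: \cref{eq:characd} defines $d_j$ using $\orb{A}{U_{*,1..j-1}}$, so one must track how the greedy rank-profile selection in Line \ref{step:maxdelta} interacts with the block ordering, and argue that a chain which stops growing ($\delta_j<2^i$ at some stage) has reached its final maximal index and stays at that value in all subsequent iterations. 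The remaining steps --- the cost accounting and the reduction of the two branches --- are then routine given \cref{lem:kernel_basis,lem:hermite_form,lem:KrylovMatrix} and the Keller--Gehrig subroutines.
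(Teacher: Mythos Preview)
Your plan follows the paper's proof closely and the cost accounting is correct, but there is one genuine slip in the correctness argument. You assert that for \(j\in J\) ``the maximal index of \(u_j\) relative to \(\orb{A}{U}\) equals its maximal index relative to \(\orb{A}{U_{*,J}}\),'' justified by the claim that the short chains ``contribute nothing new beyond what the long chains already reach.'' This equality is false in general. Take \(A\) the \(4\times 4\) nilpotent shift, \(u_1=e_4\), \(u_2=e_1\): then \(d_1=1\), \(d_2=3\) relative to \(U\), but the maximal index of \(u_2\) relative to \(U_{*,J}=[u_2]\) alone is \(4\). Likewise, for \(j\notin J\) you suggest \(\delta_j\) is ``final,'' i.e.\ equal to \(d_j\); but with \(u_1=e_1,u_2=e_4\) and \(2^\DD=2\) one gets \(\delta_2=1>0=d_2\).

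What the paper actually uses --- and what suffices --- is only the \emph{inequality} in both directions. For \(j\notin J\), lex-maximality of \(\delta\) with \(\delta_j<2^\DD\) means \(A^{\delta_j}u_j\) lies in the span of the preceding columns of \(\kry{A}{U}{\delta}\), which is contained in \(\operatorname{Span}(u_j,\dots,A^{\delta_j-1}u_j)+\orb{A}{U_{*,1..j-1}}\); by minimality in \cref{eq:characd}, \(d_j\le\delta_j\). For \(j\in J\), removing preceding vectors can only delay the first dependency in \cref{eq:characd}, so the index computed on \(U_{*,J}\) is \(\ge d_j\). Hence every block \(K_j\) at Line~\ref{step:beforemerge} carries at least \(d_j\) iterates, and the final column rank profile on \(K\) extracts exactly \(\kry{A}{U}{d}\). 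With this correction, your argument and the paper's coincide; your complexity bounds (loop \(\bigO{n^\expmm\loglog(n)}\); \(s\le n/2^\DD\le n/\log(n)^\expofinal\) so the tail is \(\bigO{n^\expmm}\) by the choice of \(\expofinal\)) are exactly those in the paper.
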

\begin{proof}
As in Keller-Gehrig's branching algorithm~\cite{KeGe85}, the loop invariant is
\(V^{(i)}=\kry{A}{U}{\delta}\) with \(\delta\in\{0,\ldots,2^i\}^m\)
lexicographically maximal such that \(V^{(i)}\) has full rank.
By induction, it remains valid upon exiting the loop, as stated
in~Line~\ref{step:loopexit}. At this point, \(s =\card{J}\) is the number of
vectors left to iterate. Indeed, \(j\notin J\) means that a linear relation of
the type
\begin{equation}
  \label{eq:characdproof}
  A^{\delta _j}u_j \in \operatorname{Span} (u_j, Au_j, \ldots, A^{\delta _j-1} u_j) +\orb{A}{U_{*,1..j-1}}
\end{equation}
has already been found (recall \cref{eq:characd}).

A maximal Krylov basis for the subset of vectors given by $J$ is then computed
using the algorithms of~Section~\ref{sec:KrylovMatrix}.
All the Krylov iterates that form the matrix \(K\) at Line~\ref{step:beforemerge} are
considered at orders higher than or equal to the final maximal ones, because relations of
the type \cref{eq:characdproof} have been detected for all $j$. Thus, a column
rank profile computation yields the maximal indices and selects the
vectors for the maximal basis.

The choice for the parameter \(\thres\) ensures that the for-loop is executed
using \(\bigO{n^\expmm\loglog(n)}\) field operations. On the other hand,
\cref{algo:KrylovMatrix,algo:MaxIndices} are called with \(m=s\leq n/2^\DD \leq
n/\log(n)^{\expofinal}\), and therefore run in \(\bigO{n^\expmm}\) field
operations.
\end{proof}

In addition, Algorithm \texttt{MaxKrylovBasis} can be adapted so as to compute a
Krylov matrix \(\kry{A}{U}{d}\) for given orders \(d\) (not necessarily the
maximal indices) that are provided as additional input.

\begin{corollary}
Given \(A\in \field^{n\times n}\), \(U\in\field^{n\times m}\), and
\(d\in\NN^m\), with $m$ and $|d|$ both in \(\bigO{n}\), the Krylov matrix
\(\kry{A}{U}{d}\) can be computed using \(\bigO{n^\expmm\loglog(n)}\)
operations in \(\field\), or \(\bigO{n^\expmm}\) operations in \(\field\) if
\(m \in \bigO{n/\log(n)^{{\expofinal}}}\) where $\expofinal$ is a positive
real constant. 
\end{corollary}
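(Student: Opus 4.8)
The plan is to mirror the proof of \cref{thm:maxkrylovbasis}, replacing only the final stages where the maximal indices are extracted. The key observation is that Algorithm \texttt{MaxKrylovBasis}, both in its small-$m$ branch (Lines~\ref{step:HCL1}--\ref{step:MaxKrylovBasis:smallm_return}) and in its preprocessing branch, never really needs $d$ to be the \emph{maximal} indices until the very end: the preprocessing for-loop (Line~\ref{step:preloop}) only ever produces partial Krylov chains $\kry{A}{u_j}{\delta_j}$ with $\delta_j \le 2^\ell$, and its purpose is to detect, for as many $j$ as possible, a dependency of the form \cref{eq:characdproof}. So first I would describe the adapted algorithm: run the for-loop exactly as in \texttt{MaxKrylovBasis} with $\ell = \lceil c\loglog(n)\rceil$ as before, but stop each chain $j$ as soon as $\delta_j$ reaches $\min(2^i, d_j)$ rather than $2^i$; equivalently, in the rank-profile step (Lines~\ref{step:CRP1}--\ref{step:maxdelta}) one additionally caps $\delta_j$ at $d_j$. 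After the loop, let $J$ be the set of indices $j$ with $\delta_j < d_j$ (these are exactly the columns whose target order has not yet been reached), call \cref{algo:KrylovMatrix} on $A$, $U_{*,J}$ and the truncated target tuple $(d_j - \delta_j)_{j\in J}$ shifted appropriately, and concatenate the results with the already-computed iterates $V_j^{(\ell)}$ for $j\notin J$. Since each $V^{(i)}$ is a genuine Krylov matrix, concatenating $\kry{A}{u_j}{\delta_j}$ with $\kry{A}{A^{\delta_j}u_j}{d_j-\delta_j}$ reproduces $\kry{A}{u_j}{d_j}$ up to the columns that were capped; for the capped columns we have $\delta_j = d_j$ already and nothing more is needed. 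This gives correctness.

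For the complexity, I would argue exactly as in \cref{thm:maxkrylovbasis}: the for-loop costs $\bigO{n^\expmm\loglog(n)}$ by the choice of $t$ (the bound on the number of iterations and on the matrix sizes at each level is unchanged, since capping only removes columns), and the call to \cref{algo:KrylovMatrix} runs on $|J| \le s \le n/2^\ell \le n/\log(n)^{\expofinal}$ vectors with $\sum_{j\in J}(d_j-\delta_j) \le |d| \in \bigO{n}$, hence costs $\bigO{n^\expmm}$ by \cref{lem:KrylovMatrix} together with the choice of $\expofinal \ge 4/(\expmm-2)$ which kills the $m^{\expmm-2}n^2\log(n)^4$ term. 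The final concatenation is free. When $m$ is already in $\bigO{n/\log(n)^{\expofinal}}$, skip the preprocessing entirely and call \cref{algo:KrylovMatrix} directly on $A$, $U$, $d$: by \cref{lem:KrylovMatrix} this is $\bigO{n^\expmm}$, again using $\sumTuple{d} \in \bigO{n}$ and the size constraint on $m$. This yields both claimed bounds.

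I expect the only subtle point — and the main thing to get right in the writeup — is the bookkeeping in the modified for-loop: one must check that capping $\delta_j$ at $d_j$ does not break the loop invariant in a way that matters. The invariant of \texttt{MaxKrylovBasis} is that $V^{(i)}$ is the lexicographically maximal full-rank Krylov matrix with $\delta \in \{0,\ldots,2^i\}^m$; with capping, the correct invariant is instead that $V^{(i)} = \kry{A}{U}{\delta}$ with each $\delta_j = \min(2^i, d_j, \mu_j)$ where $\mu_j$ is the $j$-th maximal index — i.e. each chain grows until it either saturates its Krylov index or hits its target $d_j$, whichever comes first. One then observes that a column $j$ with $\delta_j < d_j$ at the end of the loop must have saturated its Krylov index, which is $\le 2^\ell$, so indeed $\mu_j < d_j$ and the remaining $d_j - \delta_j$ iterates are computed by the polynomial-matrix subroutine on genuinely few vectors; no vector can ``escape'' both caps. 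Everything else is a routine transcription of the existing proof.
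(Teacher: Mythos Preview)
Your overall strategy --- preprocess with a capped Keller--Gehrig loop, then hand the surviving columns to \cref{algo:KrylovMatrix} --- is exactly what the paper does. But one detail is wrong and it breaks the complexity bound.

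You propose to \emph{keep} the rank-profile step and ``additionally cap $\delta_j$ at $d_j$''. The paper instead \emph{removes} the rank-profile step entirely: Lines~\ref{step:CRP1}--\ref{step:maxdelta} are replaced by the bare update $\delta_j \gets \min(2\delta_j, d_j)$ for $j\in J$. This is not cosmetic. With your variant the loop invariant is (at best) $\delta_j = \min(2^i, d_j, \mu_j)$, so your post-loop set $J = \{j : \delta_j < d_j\}$ contains every $j$ for which the rank profile detected an early dependency $\mu_j < d_j$, $\mu_j < 2^\ell$. There is no bound on the number of such columns in terms of $|d|/2^\ell$: take for instance $A=0$, all $d_j = 2$, $m = n/2$; then every $\mu_j = 1$, every $\delta_j = 1 < d_j$, and $|J| = m = \Theta(n)$. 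The subsequent call to \cref{algo:KrylovMatrix} then costs $\Theta(n^\omega \log(n)^4)$ by \cref{lem:KrylovMatrix}, not $O(n^\omega)$. Your bound ``$|J| \le s \le n/2^\ell$'' conflates your $J$ with the paper's $\{j : \delta_j = 2^\ell\}$, and your last-paragraph claim that ``$\delta_j < d_j$ implies $\mu_j \le 2^\ell$'' fails whenever $2^\ell < d_j$ and $\mu_j > 2^\ell$.

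Once you drop the rank profile, $\delta_j = \min(2^\ell, d_j)$ at loop exit, and then $J = \{j : \delta_j = 2^\ell\} = \{j : d_j \ge 2^\ell\}$ has size at most $|d|/2^\ell \le n/\log(n)^c$ as required. The paper then calls \cref{algo:KrylovMatrix} with the \emph{full} targets $(d_j)_{j\in J}$, not $(d_j - \delta_j)_{j\in J}$: this recomputes the first $2^\ell$ iterates for those columns, but that is harmless since $|J|$ is small and $\sum_{j\in J} d_j \le |d| \in O(n)$. No ``shifting'' is needed, and the final column-rank-profile at Line~\ref{step:return} is also removed (return $K$ directly).
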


This only requires the following modifications:
\begin{enumerate}
  \item Line~\ref{step:CRP1} and \ref{step:maxdelta} should be replaced by
    \begin{algorithmic}
      \State\InlineFor{all \(j\in J\)}{ \(\delta_j \gets \min(2\delta_j,d_j)\)}
    \end{algorithmic}
  \item Line~\ref{step:return} should be replaced by
    \begin{algorithmic}
      \State \Return \(K\)
    \end{algorithmic}
  \item remove Lines~\ref{step:HCL1} and \ref{step:HCL2}.
\end{enumerate}

%
%
\section{Frobenius and Kalman forms} \label{sec:decomp}

In this section we discuss some consequences of our new algorithms for Krylov
bases, namely some improved complexity bounds for directly related problems.

\subsection{Frobenius normal form}

Generically, the Frobenius normal form of an $n\times n$ matrix can be computed
using $O(n^\expmm)$ operations in $\field$ [\citealp[Sec.\,6]{KeGe85};
\citealp{PernetStorjohann2007}], and the approach in
\cite{PernetStorjohann2007} mainly provides a Las Vegas probabilistic algorithm
in \(\bigO{n^\expmm}\). It is still an open question to obtain the same
complexity bound with a deterministic algorithm, and also to compute an
associated transformation matrix. Our results allow us to make some progress on
both aspects.

Since this is often a basic operation for these problems, we can already note
that from Lemma~\ref{lem:hermite_gives_degrees} and its proof,  the minimal
polynomial of a vector $u\in \field ^n$ can be computed in \(\bigO{n^\expmm}\),
via Lemma~\ref{lem:kernel_basis}. Indeed this minimal polynomial is the last
entry, made monic, of any kernel basis (in this case a single vector) of $[xI-A
\;\; -u]$.

Our algorithms make it possible to obtain \(\bigO{n^\expmm}\) for the Frobenius
form with associated transformation matrix in a special case.  The general cost
bound \(\bigO{n^\expmm\loglog(n)}\) is achieved in~\cite[Theorem
7.1]{PernetStorjohann2007Frob} with a probabilistic algorithm.  To have  a
transformation, we  can first compute the Frobenius form alone using
$O(n^{\expmm})$ operations.  If it has $m$ non-trivial blocks and $U\in \field
^{n\times m}$ is chosen uniformly at random, then we know that a transformation
matrix can be computed from the maximal Krylov basis of \(\orb{A}{U}\) using
$O(n^\expmm)$ operations~\cite[Thm.\,2.5 \& 4.3]{giesbrecht1995nearly}.  So we
have the following. 
\begin{corollary}
  \label{cor:frobproba} 
  Let \(A\in\field^{n\times n}\) with \(\# \field \geq n^2\), and assume that
  its Frobenius normal form has \(m\in\bigO{n/\log(n)^\expofinal}\) non-trivial
  blocks. A transformation matrix to the form can be computed by a Las Vegas
  probabilistic algorithm using \(\bigO{n^\expmm}\) operations in \(\field\). 
\end{corollary}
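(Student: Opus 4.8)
The plan is to assemble the corollary from three ingredients that are already established in the excerpt and in the cited literature. First, I would recall that the Frobenius normal form of $A$ alone (without transformation) can be computed deterministically in $\bigO{n^\expmm}$ arithmetic operations over a sufficiently large field by the Keller-Gehrig approach~\cite[Sec.\,6]{KeGe85} or, more robustly, by a Las Vegas algorithm in $\bigO{n^\expmm}$~\cite{PernetStorjohann2007}; in either case the cost does not dominate the claimed bound. From this first step we learn the block structure, in particular the number $m$ of nontrivial invariant factors, which by hypothesis satisfies $m\in\bigO{n/\log(n)^\expofinal}$.

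Second, I would invoke the reduction of the transformation-matrix problem to a single maximal Krylov basis computation. Choosing $U\in\field^{n\times m}$ with entries drawn uniformly at random and $\#\field\ge n^2$, the results of Giesbrecht~\cite[Thm.\,2.5 \& 4.3]{giesbrecht1995nearly} guarantee that with high probability the maximal Krylov basis of $\orb{A}{U}$ has exactly the right structure (its maximal indices match the degrees of the invariant factors of $A$), and that from this basis one can read off — using only $\bigO{n^\expmm}$ further operations — a change-of-basis matrix putting $A$ into Frobenius form. The Schwartz--Zippel bound, applied to the relevant nonvanishing minor(s), is what forces $\#\field\ge n^2$ and bounds the failure probability below a constant; a verification step (e.g. checking that the computed transformation actually conjugates $A$ to its Frobenius form) upgrades this to a Las Vegas guarantee.

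Third — and this is the only genuinely new input — I would apply \cref{thm:maxkrylovbasis}: since the number of vectors is $m\in\bigO{n/\log(n)^\expofinal}$, Algorithm \texttt{MaxKrylovBasis} computes the maximal Krylov basis of $\orb{A}{U}$ in $\bigO{n^\expmm}$ field operations. Combining the three steps, every stage costs $\bigO{n^\expmm}$, so the total is $\bigO{n^\expmm}$, and the only randomness is in the choice of $U$, yielding the Las Vegas claim.

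The main obstacle is not complexity bookkeeping — that is routine given \cref{thm:maxkrylovbasis} — but rather making the interface with Giesbrecht's results precise: one must check that $\expofinal$ here is the \emph{same} constant as in \cref{thm:maxkrylovbasis}, that the genericity conditions in~\cite{giesbrecht1995nearly} are exactly met by a uniform random $U$ over a field of size $\ge n^2$ (so that $m$ random vectors suffice to span a faithful cyclic-like decomposition with probability bounded away from zero), and that the post-processing to extract the transformation matrix from the Krylov basis genuinely stays within $\bigO{n^\expmm}$ rather than picking up a logarithmic factor. I would therefore state the proof tersely, citing \cref{thm:maxkrylovbasis}, \cite[Sec.\,6]{KeGe85} (or~\cite{PernetStorjohann2007}) for the form itself, and \cite[Thm.\,2.5 \& 4.3]{giesbrecht1995nearly} for the reduction and the success probability, and leave these compatibility checks as the points a careful reader should verify.
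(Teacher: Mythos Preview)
Your proposal is correct and follows essentially the same approach as the paper: compute the Frobenius form alone in \(\bigO{n^\expmm}\) via \cite{PernetStorjohann2007}, pick \(U\in\field^{n\times m}\) uniformly at random and invoke \cite[Thm.\,2.5 \& 4.3]{giesbrecht1995nearly} to reduce the transformation-matrix problem to a maximal Krylov basis of \(\orb{A}{U}\), then apply \cref{thm:maxkrylovbasis} with \(m\in\bigO{n/\log(n)^\expofinal}\). The paper's own justification is just the paragraph preceding the corollary, so your write-up is in fact more detailed; the compatibility checks you flag (same constant \(\expofinal\), sufficiency of \(\#\field\ge n^2\), post-processing in \(\bigO{n^\expmm}\)) are exactly the points the paper leaves to the cited references.
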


The fastest deterministic algorithm to compute a transformation to Frobenius
form is given in \cite{Sto01} (see also \cite[Chap.\,9]{Sto00}), with a cost of
\(\bigO{n^\expmm\log(n)\loglog(n)}\). This cost is essentially $O(\loglog(n))$
computations of maximal Krylov bases, plus $O(n^\expmm)$ operations.

\begin{corollary}
  \label{cor:frobdet}
  Given $A\in \field^{n\times n}$, a transformation matrix to the Frobenius
  normal form of \(A\) can be computed using \(\bigO{n^\expmm\loglog(n)^2}\)
  operations in \(\field\).
\end{corollary}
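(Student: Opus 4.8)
The plan is to reuse, essentially verbatim, the deterministic reduction of Storjohann~\cite{Sto01} (see also~\cite[Chap.\,9]{Sto00}): given $A\in\field^{n\times n}$, a transformation matrix to its Frobenius normal form is produced by carrying out $\bigO{\loglog(n)}$ maximal Krylov basis computations, on instances of dimension at most $n$ built from $\bigO{n}$ starting vectors, together with $\bigO{n^\expmm}$ additional field operations (matrix products, Gaussian eliminations, and bookkeeping on companion blocks). The only modification to that algorithm is to compute each of those maximal Krylov bases with Algorithm \texttt{MaxKrylovBasis} rather than with Keller-Gehrig's algorithm~\cite{KeGe85}.

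Concretely, I would first recall the structure of Storjohann's method: after an $\bigO{n^\expmm}$ preprocessing step it proceeds through $\bigO{\loglog(n)}$ stages, each of which isolates a square block $A'$ of size $n'\le n$ together with a set $U'$ of $m'\in\bigO{n}$ vectors whose maximal Krylov basis (in the lexicographic sense of \cref{sec:intro}) yields the information needed to peel off one further part of the normal form and update the transformation accordingly; between stages only $\bigO{n^\expmm}$ arithmetic operations are spent. By \cref{thm:maxkrylovbasis}, since $m'\in\bigO{n'}$, Algorithm \texttt{MaxKrylovBasis} computes each such basis using $\bigO{(n')^\expmm\loglog(n')}\subseteq\bigO{n^\expmm\loglog(n)}$ operations in $\field$, in place of the $\bigO{n^\expmm\log(n)}$ needed with Keller-Gehrig's algorithm. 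Summing over the $\bigO{\loglog(n)}$ stages gives $\bigO{n^\expmm\loglog(n)^2}$ for the Krylov computations, and the remaining $\bigO{n^\expmm}$ operations of the reduction are absorbed into this bound, yielding the claim.

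The step I expect to require the most care is verifying that the auxiliary problems arising in \cite{Sto01,Sto00} really do reduce to the \emph{maximal} Krylov basis as defined here (the lexicographically largest index tuple), and that the relevant instance sizes satisfy $m'\in\bigO{n'}$ with $n'\le n$, so that the per-call bound of \cref{thm:maxkrylovbasis} applies uniformly across all stages. Since replacing the per-call cost changes neither the $\bigO{\loglog(n)}$ bound on the number of calls nor the $\bigO{n^\expmm}$ cost of the surrounding operations, no further re-analysis of Storjohann's algorithm is needed: the argument is a black-box substitution into an existing reduction, and the only genuinely new input is \cref{thm:maxkrylovbasis}.
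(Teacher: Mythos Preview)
Your proposal is correct and matches the paper's own argument: the paper simply notes that Storjohann's deterministic algorithm~\cite{Sto01,Sto00} costs \(O(\loglog(n))\) maximal Krylov basis computations plus \(O(n^\expmm)\) operations, and then invokes \cref{thm:maxkrylovbasis} to bound each basis computation by \(O(n^\expmm\loglog(n))\). Your write-up is, if anything, more detailed than the paper's, which treats the corollary as an immediate consequence of the preceding sentence.
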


Finally, once the Frobenius form and a transformation matrix are known, and given an integer $k\geq 0$,
computing $A^k$ costs $O(n^\expmm)$ plus $O(\log(k)\timepm{n})$ field
operations (see e.g.\ \cite[Cor.\,7.4]{giesbrecht1995nearly}). As mentioned in
Section~\ref{sec:intro}, here $d \mapsto \timepm{d}$ is a multiplication time function
for $\field[x]$. So, in total, computing \(A^k\) costs
\(\bigO{n^\expmm\loglog(n)^2}\) if $\log(k)\in
\bigO{n^{\expmm-1-\varepsilon}}$. The evaluation of a polynomial $p\in
\field[x]$ at $A$ can also be considered in a similar way, using, for example,
the analysis of \cite[Thm.\,7.3]{giesbrecht1995nearly}.

\subsection{Kalman decomposition}

The study of the structure of linear dynamical systems in control theory is
directly related to Krylov spaces and matrix polynomial forms~\cite{Kailath80}.
For example, the link we use between maximal indices and Hermite form degrees
originates from this correspondence.

Our work could be continued to show that the complexity bound
\(\bigO{n^\expmm\loglog(n)}\) in \cref{thm:maxkrylovbasis} could be applied to
the computation of a Kalman decomposition~[\citealp{kal63};
\citealp[Sec.\,2.4.2, p.\,128]{Kailath80}]. This is beyond the scope of this
paper, so we will not go into detail about it here.  
However, we can specify the main ingredient. Given $A$ and~$U$ with
$\dim(\orb{A}{U})=\nu$, we want to transform the system $(A,U)$ according to
\cite[Sec.\,2.4.2, Eq.\,(11), p.\,130]{Kailath80}: 
\[
  P^{-1}AP=  \begin{bmatrix} A_c & A_1 \\ \matz & A_2 \end{bmatrix}, ~~
  P^{-1}U = \begin{bmatrix} U_c \\ { 0} \end{bmatrix},
\]
where $A_c$ is $\nu\times \nu$, $U_c$ is $\nu \times m$, and $P$ is invertible
in $\field^{n\times n}$. The matrix $P$ can be formed by a Krylov basis of
$\orb{A}{U}$ and a matrix with $n-\nu$ independent columns not in $\orb{A}{U}$.
The general decomposition is obtained by combining  a constant number of such
transformations and basic matrix operations to decompose $(A,U)$.

\section{Polynomial Matrix subroutines}
\label{sec:polmat}

We now detail the subroutines used in Algorithm \texttt{KrylovMatrix}.
Section~\ref{sec:polmat:timefunc} introduces some convenient notation for complexity
bounds. Section~\ref{sec:polmat:TruncatedInverse} combines high-order lifting
\cite{Storjohann2003} and partial linearization~\cite{GuSaStVa12} to compute
truncated inverse expansions, while Section~\ref{sec:polmat:TruncatedProduct} focuses
on truncated matrix products. In both cases, the difficulty towards efficiency
lies in the presence of unbalanced degrees and unbalanced truncation orders.

%
%
\subsection{Complexity helper functions}
\label{sec:polmat:timefunc}

We briefly recall notation and assumptions about cost functions; for more
details, we refer to \cite[Chap.\,8]{GathenGerhard1999} for the general
framework, and to \cite[Sec.\,2]{Storjohann2003} and
\cite[Sec.\,1.1]{NeigerPernet21} for polynomial matrices specifically.

In what follows, we assume fixed multiplication algorithms:
\begin{itemize}
  \item for polynomials in \(\pring\), with cost function \(\timepm{d}\) when
    the input polynomials have degree at most \(d\);
  \item for matrices in \(\matspace{m}{m}\), with cost \(\bigO{m^\expmm}\);
  \item for polynomial matrices in \(\pmatspace{m}{m}\), with cost function
    \(\timepmm{m}{d}\) when the input matrices have degree at most \(d\).
\end{itemize}

To simplify our analyses and the resulting bounds, we make the same assumptions
as in the above references. In particular, \(\expmm>2\), \(\timepm{\cdot}\) is
superlinear, and \(\timepmm{m}{d} \in \bigO{m^\expmm \timepm{d}}\).

We will also use the function \(\timepmmrec{m}{d}\) from
\cite[Sec.\,2]{Storjohann2003}; as noted in this reference, the above-mentioned
assumptions imply \(\timepmmrec{m}{d} \in \bigO{m^\expmm \timepm{d} \log(d)}\).

%
%
\subsection{Polynomial matrix truncated inverse}
\label{sec:polmat:TruncatedInverse}

In Algorithm \texttt{KrylovMatrix}, Line~\ref{step:kryloveval:inverse} asks to find
terms of the power series expansion of \(P^{-1}\), for an \(m\times m\)
polynomial matrix \(P\) with \(P(0)\) invertible. Customary algorithms for this
task, depending on the range of parameters (\(m\), \(\deg(P)\), truncation
order), include a matrix extension of Newton iteration [\citealp{MoCa79}; \citealp[Chap.\,9]{GathenGerhard1999}],
or matrix inversion \cite{ZLS15} followed by Newton iteration on the individual entries.

Here, a first obstacle towards efficiency comes from the heterogeneity of
truncation orders: one seeks the first \(d_j\) terms of the \(j\)th column of
the expansion of \(P^{-1}\), for some prescribed \(d = (d_j)_j \in \NN^m\)
which may have unbalanced entries. In the extreme case \(d =
(d_1,0,\ldots,0)\), the task becomes the computation of many initial
terms of the expansion of \(P^{-1} \trsp{[1 \; 0 \; \cdots \; 0]}\), the
first column of~\(P^{-1}\). This is handled efficiently via high-order lifting
techniques \cite[Sec.\,9]{Storjohann2003}. Our solution for a general tuple
\(d\) is to rely on cases where the high-order lifting approach is efficient,
by splitting the truncation orders into subsets of the type \(\{j \in \range{m}
\mid 2^{k-1} \sumTuple{d}/m < d_j \le 2^{k} \sumTuple{d}/m\}\), for only
logarithmically many values of \(k\). Observe that this subset has cardinality
less than \(m/2^{k-1}\): higher truncation orders involve fewer columns of the
inverse.

A second obstacle is due to the heterogeneity of the degrees in the matrix
\(P\) itself. In the context of Algorithm \texttt{KrylovMatrix}, \(P\) may have
unbalanced column degrees, but they are controlled to some extent: their sum is
at most \(n\) (the dimension of the matrix \(A\)). Whereas such cases were not
handled in the original description of high-order lifting, the partial
linearization tools described in \cite[Sec.\,6]{GuSaStVa12} allow one to deal
with this obstacle. For example, this was applied in
\cite[Lem.\,3.3]{NeigerVu2017}, yet in a way that is not efficient enough
for the matrices \(P\) encountered here: this reference targets low average row degree for \(P\),
whereas here our main control is on the average column degree.
Here, following this combination of \cite[Sec.\,9]{Storjohann2003}
and~\cite[Sec.\,6]{GuSaStVa12}, we present an algorithm which supports a more
general unbalancedness of degrees of \(P\).
In the next two lemmas, we summarize the properties we will use from the
latter references.

\begin{lemma}[{\cite[Sec.\,6]{GuSaStVa12}}]
  \label{lem:partial_linearization}
  Let \(\mat{P} \in \pmatspace{m}{m}\) be nonsingular. Consider its so-called
  generic determinantal degree \(\Delta(P)\),
  \[
    \Delta(P) = \max_{\pi \in \mathfrak{S}_m} \sum_{\substack{1\le i \le m \\ A_{i,\pi_i} \neq 0}} \deg(A_{i,\pi_i})
    \le \sumTuple{\cdeg{P}}.
  \]
  One can build, without using field operations, a matrix \(\mat{\bar{P}} \in
  \pmatspace{\bar{m}}{\bar{m}}\) of degree \(\le \lceil \Delta(P) / m
  \rceil\) and size \(m \le \bar{m} < 3m\), which is such that \(\det(\mat{P}) =
  \det(\mat{\bar{P}})\) and \(\mat{P}^{-1}\) is the principal \(m \times m\)
  submatrix of \(\mat{\bar{P}}^{-1}\).
\end{lemma}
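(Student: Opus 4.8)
The plan is to make explicit the block-companion-style linearization of a single unbalanced column, then apply it to every column whose degree exceeds the target bound $\delta := \lceil \Delta(P)/m \rceil$, and finally to argue that the resulting larger matrix $\bar P$ has the three claimed properties (degree $\le \delta$, size $< 3m$, same determinant, $P^{-1}$ sitting in the top-left corner). First I would treat one column in isolation: suppose the $k$-th column $\matcol{P}{k}$ has column degree $e_k > \delta$, and write it as $\matcol{P}{k} = c_0 + c_1 x^{\delta} + c_2 x^{2\delta} + \cdots + c_{r-1} x^{(r-1)\delta}$ with $r = \lceil (e_k+1)/\delta \rceil$ blocks, each $c_i \in \vecspace{m}$ of degree $<\delta$ (the last one possibly of degree $\le e_k - (r-1)\delta$). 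Introduce $r-1$ new coordinates $y_1,\dots,y_{r-1}$ together with the linear relations $y_1 = x^\delta \cdot (\text{old }k\text{-th coord}) + (\text{contribution of }c_1)$, etc.; concretely one inserts, for this column, a small $(r-1)\times(r-1)$ bidiagonal block with $x^\delta$ on the subdiagonal, $1$ on the diagonal, the vectors $c_1,\dots,c_{r-1}$ placed so that expanding along the new rows/columns recovers $\matcol{P}{k}$, and the new columns carrying $c_i$'s of degree $<\delta$. This is exactly the linearization of a polynomial into a companion-like pencil, done column-wise; crucially no field operations are used — one only shuffles the existing coefficients of $P$ into new positions and inserts constants and the monomial $x^\delta$.

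Next I would carry out the bookkeeping on the global matrix. Apply the above expansion simultaneously to each column $k$ with $e_k>\delta$, each contributing $r_k - 1 = \lceil (e_k+1)/\delta\rceil - 1 \le e_k/\delta$ new rows and columns. Hence $\bar m - m = \sum_{k : e_k > \delta}(r_k-1) \le \sum_k e_k/\delta = \sumTuple{\cdeg{P}}/\delta$. Here the key point is to use $\Delta(P)$ rather than $\sumTuple{\cdeg{P}}$ in the definition of $\delta$: one wants $\bar m < 3m$, which requires the bound $\sum_k e_k/\delta < 2m$, i.e. $\sum_k e_k < 2m\delta = 2m\lceil\Delta(P)/m\rceil$. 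The naive estimate $\sum_k e_k = \sumTuple{\cdeg{P}}$ need not be $\le 2\Delta(P)$, so here one must invoke the finer combinatorial fact from \cite{GuSaStVa12}: after (cheaply, without field operations) permuting columns so that they are sorted by degree and discarding/merging, the relevant sum of the degrees that genuinely need splitting is $\le 2\Delta(P)$ — intuitively because $\Delta(P)$, being the max over permutations of a sum of $m$ entry-degrees with distinct columns, already "pays for" one full copy of the heavy part of each column. This is the step I expect to be the main obstacle: getting the constant $3$ exactly right, which is why one reduces to $\Delta(P)$ and why the statement is phrased with $\bar m < 3m$ rather than an unspecified $O(m)$.

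Finally I would verify the three algebraic properties. That $\det(\bar P) = \det(P)$: expanding $\det(\bar P)$ along each inserted row (which has a single $1$ on the diagonal among entries interacting with old columns, and $x^\delta$ in one off-diagonal slot) telescopes the blocks back together and reconstitutes the original column, so cofactor expansion along all new rows returns $\det(P)$ up to sign, and one checks the sign is $+1$ by the chosen ordering of the new coordinates. That $P^{-1}$ is the principal $m\times m$ submatrix of $\bar P^{-1}$: write $\bar P = \left[\begin{smallmatrix} P & * \\ * & D\end{smallmatrix}\right]$ in the old-vs-new coordinate blocking; the new-coordinate block $D$ is (block) triangular with $1$'s on the diagonal hence unimodular, and the Schur complement of $D$ in $\bar P$ is exactly $P$ — this is the defining property of the linearization, and it can be checked by the same column operations that prove the determinant identity, since those operations clear the off-diagonal $x^\delta$'s. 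The block-inverse formula then gives that the top-left $m\times m$ block of $\bar P^{-1}$ is $(P - *D^{-1}*)^{-1} = P^{-1}$. The degree bound $\deg(\bar P) \le \delta = \lceil \Delta(P)/m\rceil$ is immediate from the construction, since the only entries of $\bar P$ that are not constant are the $c_i$'s (degree $<\delta$) and the inserted $x^\delta$'s (degree exactly $\delta$). The inequality $\Delta(P) \le \sumTuple{\cdeg{P}}$ is clear: any permutation $\pi$ contributes $\sum_i \deg(A_{i,\pi_i}) \le \sum_j \cdeg{P}_j$ since each column index is used once and $\deg(A_{i,\pi_i}) \le \cdeg{P}_{\pi_i}$.
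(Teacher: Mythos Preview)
The paper does not give its own proof of this lemma; it is stated purely as a citation of \cite[Sec.\,6]{GuSaStVa12}, with no argument supplied. Your sketch is a faithful outline of the construction in that reference: the column-splitting into degree-$\delta$ slices with companion-style glue blocks is the right move, and your Schur-complement verification that the top-left $m\times m$ block of $\bar P^{-1}$ equals $P^{-1}$ (via $A - BD^{-1}C = P$ with $D$ unit lower bidiagonal) is correct, as is the determinant argument once the signs in $D$ are chosen so that $\det D = 1$.

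You are also right to flag the size bound $\bar m < 3m$ as the crux, and honest that your sketch does not close it. The naive column-only count gives $\bar m - m \le \sumTuple{\cdeg{P}}/\delta$, and there is no general inequality $\sumTuple{\cdeg{P}} \le 2\Delta(P)$, so column linearization alone with threshold $\lceil \Delta(P)/m\rceil$ does not yield the constant~$3$. The passage from $\sumTuple{\cdeg{P}}$ to $\Delta(P)$ is precisely the combinatorial content of \cite[Sec.\,6]{GuSaStVa12}; their construction is more elaborate than the single-sided one you describe (it linearizes along both rows and columns and exploits that $\Delta(P)$ simultaneously bounds a row-type and a column-type degree sum). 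So your plan matches the cited source in spirit but, like the present paper, ultimately defers to that reference for the sharp constant.
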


\begin{lemma}[{\cite[Sec.\,9]{Storjohann2003}}]
  \label{lem:lifting}
  Algorithm \textproc{SeriesSol} \cite[Alg.\,4]{Storjohann2003} takes as input
  \(P \in \pmatspace{m}{m}\) of degree \(t\) with \(P(0)\) invertible, \(V \in
  \pmatspace{m}{n}\), and \(s \in \ZZp\), and returns the expansion \((P^{-1}V)
  \rem x^{s t}\) using
  \[
    \bigO{\log(s+1) \left\lceil \frac{s n}{m} \right\rceil \timepmm{m}{t} + \timepmmrec{m}{t}}
  \]
  operations in \(\field\). The term \(\timepmmrec{m}{t}\) comes from a call to
  \cite[Alg.\,1]{Storjohann2003} which is independent of \(V\), namely
  $\textproc{HighOrderComp}(P,\lceil \log_2(s) \rceil-1)$.
\end{lemma}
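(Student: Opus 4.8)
The plan is to recall Storjohann's high-order lifting algorithm \textproc{SeriesSol} of \cite[Alg.\,4]{Storjohann2003} together with its analysis, and to check that the stated bound holds in our normalized cost notation and for a right-hand side $V$ of arbitrary column dimension $n$ --- the quantity $\lceil sn/m\rceil$ being the only place where $n$ enters. Since $P(0)$ is invertible, $P^{-1}$ is a well-defined matrix of power series, and since only $V \rem x^{st}$ matters we may assume $\deg V < st$.

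First I would describe the structure of \textproc{SeriesSol}: it builds $(P^{-1}V) \rem x^{st}$ by $\lceil\log_2(s)\rceil$ doubling steps, maintaining after step $i$ the partial solution $Y_i \equiv P^{-1}V$ modulo $x^{2^i t}$ and, at step $i$, appending the block of $O(2^i t)$ coefficients needed to reach precision $x^{2^{i+1} t}$. The crux of high-order lifting \cite[Sec.\,9]{Storjohann2003} is that this new block is recovered, up to a low-degree correction term, as the product of a single precomputed \emph{high-order component} of $P^{-1}$ --- an $m\times m$ matrix of degree $O(t)$ that encodes a length-$O(t)$ window of the series $P^{-1}$ near exponent $2^i t$ --- by the column-wise linearization of the relevant part of the current residue $V - P Y_i$. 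The $\lceil\log_2(s)\rceil$ high-order components that are needed form exactly the output of $\textproc{HighOrderComp}(P,\lceil\log_2(s)\rceil-1)$ and depend on $P$ only, so they are computed once, before the doubling loop.

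Correctness then follows from \cite[Sec.\,9]{Storjohann2003}: the high-order component identities hold for any $P$ with $P(0)$ invertible and the doubling recursion is exact; I would only add the remark that this argument is insensitive to whether $n$ is smaller or larger than $m$. For the cost I would split as in the statement. The precomputation $\textproc{HighOrderComp}(P,\lceil\log_2(s)\rceil-1)$ costs $O(\timepmmrec{m}{t})$ by \cite[Sec.\,2]{Storjohann2003}, and this is the $V$-independent term (recall $\timepmmrec{m}{t}\in O(m^\expmm\timepm{t}\log t)$ under the assumptions of Section~\ref{sec:polmat:timefunc}). Step $i$ of the loop reduces to one polynomial matrix product of an $m\times m$ matrix of degree $O(t)$ by an $m\times O(2^i n)$ matrix of degree $O(t)$ --- the factor $2^i$ because the residue block being processed spans $O(2^i t)$ coefficients, which linearize column-wise into $O(2^i)$ groups of $n$ columns of degree $O(t)$ --- together with $O(2^i t m n)$ lower-order bookkeeping for truncations, shifts and additions. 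Splitting the $O(2^i n)$ columns into $\lceil 2^i n/m\rceil$ blocks of at most $m$ and using $\timepmm{m}{t}\in O(m^\expmm\timepm{t})$, step $i$ costs $O(\lceil 2^i n/m\rceil\timepmm{m}{t})$. Bounding each of the $O(\log(s+1))$ steps by the most expensive one (using $2^i\le s$) and adding the precomputation gives $O(\log(s+1)\lceil sn/m\rceil\timepmm{m}{t}+\timepmmrec{m}{t})$.

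The step I expect to be the only delicate point --- beyond transcription from \cite{Storjohann2003} --- is this last cost accounting, and specifically making $n$ enter only through $\lceil\cdot/m\rceil$: this is exactly what makes \textproc{SeriesSol} efficient for a narrow $V$ (a single column, say, as arises after the partial linearization of Lemma~\ref{lem:partial_linearization} where one only wants the leading $m\times m$ block of $\bar{P}^{-1}$), and it forces one to keep track of the slab-wise structure of the lifting rather than treat $V$ as a dense $m\times n$ block; one must also check that no intermediate polynomial matrix grows beyond degree $O(t)$, which is precisely the invariant high-order lifting is built to maintain. Everything else is a direct reference to \cite{Storjohann2003}.
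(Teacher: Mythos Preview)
Your proposal is essentially correct: you re-derive the structure and cost of \textproc{SeriesSol} step by step, arriving at the stated bound. The paper's proof takes a much shorter route: it simply invokes \cite[Prop.\,15]{Storjohann2003} (and the paragraph before it) with the specialization \(X = x^t\), noting that this choice is admissible because \(x^t\) is coprime with \(\det(P)\) thanks to \(P(0)\) being invertible. Your detailed walkthrough of the doubling loop and the \(\lceil 2^i n/m\rceil\) accounting is not wrong, but it re-proves what is already packaged in Storjohann's proposition; the paper prefers to cite rather than re-derive.

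Two technicalities the paper handles that you skip: first, the trivial case \(t=0\) (constant \(P\)), which needs a separate one-line mention; second, Storjohann's statement of \textproc{SeriesSol} assumes \(s\) is a power of two and \(s\ge 4\), and the paper explains how to lift both restrictions (round \(s\) up to the next power of two and truncate; handle \(s\in\bigO{1}\) by plain Newton iteration in \(\bigO{\lceil n/m\rceil\timepmm{m}{t}}\)). These are minor but worth stating explicitly, since the lemma claims the bound for all \(s\in\ZZp\).
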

\begin{proof}
  There is no operation needed when \(t = 0\), i.e., for a constant matrix
  \(P\). For \(t > 0\), this follows from \cite[Prop.\,15]{Storjohann2003} and
  the paragraph that precedes it, applied with \(X = x^{t}\). Indeed this is a
  valid choice since \(X\) has degree \(\deg(P)\) and is coprime with
  \(\det(P)\).  The corresponding algorithm in \cite{Storjohann2003} has two
  requirements on \(s\), which are easily lifted: it should be a power of \(2\)
  (one can compute with the next power of \(2\), i.e.\ \(2^{\lceil \log_2(s)
  \rceil}\), and then truncate at the desired order) and it should be at least
  \(4\) (the case \(s \in \bigO{1}\) is handled in \(\bigO{\lceil n/m \rceil
  \timepmm{m}{t}}\) via a direct Newton iteration).
\end{proof}

\begin{proposition}
  \label{prop:TruncatedInverse}
  Let \(\mat{P} \in \pmatspace{m}{m}\) with \(P(0)\) invertible, and let \(d =
  (d_j)_j \in \NN^m\). Let \(t\) be the degree of the partially linearized
  matrix \(\bar{P}\) as in Lemma~\ref{lem:partial_linearization}, thus with \(t \le
  \lceil \Delta(P) / m \rceil\). Algorithm \texttt{TruncatedInverse} uses
\(\bigO{m^\expmm}\) operations in \(\field\) if \(\deg(P) = 0\), and
  \[
    \bigO{ \timepmmrec{m}{\avgdeg} +
      \left\lceil \frac{\sumTuple{d}}{m\avgdeg} \right\rceil
      \timepmm{m}{\avgdeg}
      \log(m) \log\left(m + \frac{\sumTuple{d}}{t} \right)
    }
  \]
  operations in \(\field\) if \(\deg(P)>0\) (which implies \(t>0\)).  It
  returns \(P^{-1} \rem x^d\), the power series expansion of \(\mat{P}^{-1}\)
  with column \(j\) truncated at order \(d_j\).
  If \(n\) is a parameter such that \(m\) and \(\Delta(P)\) are both in
  \(\bigO{n}\), the above bound is in
  \[
    \bigO{m^\expmm \timepm{\frac{n}{m}}
      \left( \log(n) + \left\lceil \frac{\sumTuple{d}}{n} \right\rceil \log(m) \log(m+\sumTuple{d}) \right)
    }.
  \]
\end{proposition}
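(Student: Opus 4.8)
The plan is to first dispatch the constant case and then, for $\deg(P)>0$, move everything onto the partially linearized matrix $\bar{P}$. If $\deg(P)=0$ then $P$ is a constant invertible matrix, and $P^{-1}\rem x^{d}$ is obtained by inverting $P$ in $\bigO{m^\expmm}$ operations and, for each column $j$, keeping $P^{-1}$ when $d_j\ge 1$ and nothing when $d_j=0$. Assume from now on $\deg(P)>0$, so $\Delta(P)\ge 1$; by \cref{lem:partial_linearization} there is $\bar{P}\in\pmatspace{\bar{m}}{\bar{m}}$ with $m\le\bar{m}<3m$, of degree $t=\deg(\bar{P})\le\lceil\Delta(P)/m\rceil$ (and $t\ge 1$), with $\det(\bar{P})=\det(P)$ and $P^{-1}$ equal to the leading $m\times m$ submatrix of $\bar{P}^{-1}$. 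Evaluating $\det(\bar{P})=\det(P)$ at $x=0$ gives $\det(\bar{P}(0))=\det(P(0))\neq 0$, so $\bar{P}(0)$ is invertible. Hence it suffices to compute, for each $j$ with $d_j\ge 1$, the top $m$ entries of $(\bar{P}^{-1}e_j)\rem x^{d_j}$ where $e_j$ is column $j$ of $\ident{\bar{m}}$, and to assemble these columns; this settles correctness once the truncated columns are available.

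Next I would split the truncation orders dyadically so as to land in the regime where high-order lifting is efficient. Set $\bar{d}=\lceil\sumTuple{d}/m\rceil$ and partition $\{j:d_j\ge 1\}$ into $J_0=\{j:1\le d_j\le\bar{d}\}$ and $J_k=\{j:2^{k-1}\bar{d}<d_j\le 2^k\bar{d}\}$ for $k=1,\dots,K$, where $K=\lceil\log_2(\max_j d_j/\bar{d})\rceil\le\lceil\log_2 m\rceil$ since $\max_j d_j\le\sumTuple{d}\le m\bar{d}$. The counting $|J_k|\,2^{k-1}\bar{d}<\sum_{j\in J_k}d_j\le\sumTuple{d}\le m\bar{d}$ yields the decisive inequality $|J_k|<m/2^{k-1}$: the higher the order, the fewer the columns. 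For each $k$ let $E_k$ be the submatrix of $\ident{\bar{m}}$ with columns indexed by $J_k$ and $s_k=\lceil 2^k\bar{d}/t\rceil$ (so $s_k t\ge 2^k\bar{d}\ge d_j$ for $j\in J_k$), and call \textproc{SeriesSol}$(\bar{P},E_k,s_k)$ from \cref{lem:lifting} to get $(\bar{P}^{-1}E_k)\rem x^{s_k t}$; truncating its columns to the exact orders $d_j$ and keeping their top $m$ rows gives the required pieces of $P^{-1}\rem x^d$. A point that matters for the cost: the only $V$-independent work in \textproc{SeriesSol} is the high-order-component precomputation $\textproc{HighOrderComp}(\bar{P},\lceil\log_2 s_{\max}\rceil-1)$, which \cref{lem:lifting} isolates as the $\timepmmrec{\bar{m}}{t}$ term; since it depends only on $\bar{P}$ and on the largest order $s_{\max}=s_K$, I would perform it once and reuse it across all $k$, at total cost $\bigO{\timepmmrec{\bar{m}}{t}}=\bigO{\timepmmrec{m}{t}}$.

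For the cost, the residual work of \textproc{SeriesSol} on group $J_k$ is $\bigO{\log(s_k+1)\lceil s_k|J_k|/\bar{m}\rceil\,\timepmm{\bar{m}}{t}}$. From $s_k|J_k|<(2^k\bar{d}/t+1)(m/2^{k-1})\le 4m\bar{d}/t+m$ (and trivially for $J_0$) one gets $\lceil s_k|J_k|/\bar{m}\rceil=\bigO{\lceil\bar{d}/t\rceil}=\bigO{\lceil\sumTuple{d}/(mt)\rceil}$, uniformly in $k$, while $s_k\le s_{\max}=\lceil 2^K\bar{d}/t\rceil\le\lceil 2\sumTuple{d}/t\rceil$ gives $\log(s_k+1)=\bigO{\log(m+\sumTuple{d}/t)}$. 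Summing over the $K+1=\bigO{\log m}$ groups and using $\timepmm{\bar{m}}{t}=\bigO{\timepmm{m}{t}}$ (as $\bar{m}<3m$), the residual cost is $\bigO{\lceil\sumTuple{d}/(mt)\rceil\,\timepmm{m}{t}\log(m)\log(m+\sumTuple{d}/t)}$; adding the one-time $\bigO{\timepmmrec{m}{t}}$ gives the first claimed bound. For the specialization, take $n\ge\max(m,\Delta(P))$ (enlarging $n$ by a constant is harmless); then $t\le\lceil n/m\rceil$ and $mt\le\Delta(P)+m=\bigO{n}$, so the standard properties of $\timepm{\cdot}$ turn $\timepmm{m}{t}$ into $\bigO{m^\expmm\timepm{n/m}}$, turn $\timepmmrec{m}{t}$ into $\bigO{m^\expmm\timepm{n/m}\log n}$, and, via $\lceil\sumTuple{d}/(mt)\rceil\,t=\bigO{\lceil\sumTuple{d}/n\rceil\,(n/m)}$, turn $\lceil\sumTuple{d}/(mt)\rceil\timepmm{m}{t}$ into $\bigO{\lceil\sumTuple{d}/n\rceil m^\expmm\timepm{n/m}}$; combined with $\log(m+\sumTuple{d}/t)\le\log(m+\sumTuple{d})$ this is exactly the simplified bound.

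The main obstacle is the cost accounting around the dyadic grouping. One has to see that the cardinality bound $|J_k|<m/2^{k-1}$ precisely compensates the growth of the lifting length $s_k$, so that each of the logarithmically many groups contributes only $\bigO{\lceil\sumTuple{d}/(mt)\rceil}$ units of ``$s n/m$'' work in \cref{lem:lifting}; and, separately, that the high-order-component precomputation must be charged once rather than once per group, which is the whole reason \cref{lem:lifting} is phrased so as to split off the $\timepmmrec{m}{t}$ term. Once those two points are secured, the remainder — including the passage to the $n$-parameter bound — is a routine if slightly tedious juggling of ceilings and of the superlinear multiplication time $\timepm{\cdot}$.
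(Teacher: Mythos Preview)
Your proposal is correct and follows essentially the same approach as the paper's proof: partial linearization via \cref{lem:partial_linearization} to reduce to a matrix $\bar P$ of balanced degree $t$, a dyadic partition of the truncation orders so that group $J_k$ has at most $m/2^{k-1}$ columns, a single shared call to \textproc{HighOrderComp} reused across groups, and the same superlinearity manipulation for the simplified $n$-parameter bound. The only differences are cosmetic (your $\bar d=\lceil\sumTuple{d}/m\rceil$ versus the paper's $\delta=\sumTuple{d}/m$, and a $J_0$ versus $J_1$ indexing convention); the key insight you single out at the end --- that $|J_k|<m/2^{k-1}$ cancels the growth of $s_k$ and that the $\timepmmrec{m}{t}$ term must be charged once --- is exactly what the paper emphasizes.
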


\renewcommand{\thealgorithm}{4}

\begin{algorithm}[ht]
  \algoCaptionLabel{TruncatedInverse}{P,d}
  \begin{algorithmic}[1]
    \Require \(\mat{P} \in \pmatspace{m}{m}\) with \(P(0)\) invertible, \(d = (d_1,\ldots,d_m) \in \NN^m\)
    \Ensure the column-truncated inverse \(\mat{P}^{-1} \rem x^d \in \pmatspace{m}{m}\)

    \State \(Q \gets\) zero matrix in \(\pmatspace{m}{m}\) \Comment{stores the result}

    \State \InlineIf{\(d = (0,\ldots,0)\)}{\Return \(Q\)}
        \label{step:TruncatedInverse:zeroorder}

    \State\InlineIf{\(\deg(P) = 0\)}{\Comment{constant matrix inversion, \(\bigO{m^\expmm}\)}}
        \label{step:TruncatedInverse:zerodeg}
        \State \hspace{0.5cm}\(Q \gets P^{-1}\); \(\matcol{Q}{j} \gets 0\) for all \(j\) with \(d_j=0\); \Return \(Q\)

    \State \CommentLine{build partition of \(\{1,\ldots,m\}\) based on truncation order}
        \label{step:TruncatedInverse:partition}
    \State \(\delta \gets \sumTuple{d}/m\); \(\ell \gets \lceil \log_2(m) \rceil\)
    \State \(J_1 \gets \{j \in \range{m} \mid d_j \le 2 \delta\}\); \(n_1 \gets \card{J_1}\)
    \For{\(k \gets 2, \ldots, \ell\)}
      \State \(J_k \gets \{j \in \range{m} \mid 2^{k-1} \delta < d_j \le 2^k \delta\}\);
                  \(n_k \gets \card{J_k}\)
    \EndFor

    \State \CommentLine{partial linearization, note that \(t = \deg(\bar{P}) > 0\)}
    \State \(\mat{\bar{P}} \in \pmatspace{\bar{m}}{\bar{m}} \gets\) matrix built from \(P\) as in Lemma~\ref{lem:partial_linearization}
    \State \(\avgdeg \gets \deg(\mat{\bar{P}})\); \InlineFor{\(k=1,\ldots,\ell\)}{\(s_k \gets \lceil 2^k\delta / \avgdeg \rceil\)}

    \State \CommentLine{store high-order components to avoid redundant iterations}
    \State \(C \gets \textproc{HighOrderComp}[x^t](\bar{P},\lceil \log_2(s_\ell) \rceil - 1)\)
            \Comment{\cite[Alg.\,1]{Storjohann2003}}
            \label{step:TruncatedInverse:HighOrderComp}

    \State \CommentLine{main loop: iteration \(k\) handles \(\matcol{Q}{j}\) for \(j\) in \(J_k\)}
    \For{\(k \gets 1, 2, \ldots, \ell\)}
      \State \(E \in \matspace{\bar{m}}{n_k} \gets \matcols{(\ident{\bar{m}})}{J_k}\) \Comment{columns of \(\bar{m}\times \bar{m}\) identity matrix}

      \State \(F \in \pmatspace{\bar{m}}{n_k} \gets \textproc{SeriesSol}(\bar{P}, E, s_k)\), using the pre-
          \label{step:TruncatedInverse:SeriesSol}
            \StateX{2}computed high-order components \(C\) \Comment{Lemma~\ref{lem:lifting}}

      \State \(\matcols{Q}{J_k} \gets\) first \(m\) rows of \(F\), all truncated at order \((d_j)_{j\in J_k}\)
          \label{step:TruncatedInverse:shaveoff}
    \EndFor
    \State \Return \(Q\)
  \end{algorithmic}
\end{algorithm}

\begin{proof}
  When \(d=(0,\ldots,0)\), the algorithm performs no field operations and
  returns the zero matrix (see Line~\ref{step:TruncatedInverse:zeroorder}). When
  \(d \neq 0\) and \(\deg(P)=0\) (hence \(t=0\)),
  Line~\ref{step:TruncatedInverse:zerodeg} correctly computes \(P^{-1} \rem x^d\)
  in complexity \(\bigO{m^\expmm}\), which is within the claimed cost since
  \(m^\expmm \in \bigO{\timepmm{m}{0}}\). From here on, assume \(d\neq 0\) and
  \(\deg(P)>0\).

  The sets \(J_1,\ldots,J_m\) built at Line~\ref{step:TruncatedInverse:partition}
  are disjoint and, since \(\max_j d_j \le \sumTuple{d} \le 2^\ell \delta\),
  they are such that \(J_1 \cup \cdots \cup J_m = \range{m}\). Note also that
  the cardinality \(n_k = \card{J_k}\) is less than \(m/2^{k-1}\).  We claim
  that at the end of the \(k\)th iteration of the main loop, \(\matcol{Q}{j}\)
  is the column \(j\) of the sought output \(\mat{P}^{-1} \rem x^d\) for all
  \(j \in J_1 \cup \cdots \cup J_k\), which implies the correctness of the
  algorithm. This claim follows from Lemma~\ref{lem:partial_linearization}.  Indeed,
  writing \(t = \deg(\bar{P})\), since the principal \(m \times m\) submatrix
  of \(\mat{\bar{P}}^{-1}\) is \(\mat{P}^{-1}\),
  Line~\ref{step:TruncatedInverse:SeriesSol} computes, in the top \(m\) rows of
  \(F\), all columns \(j \in J_k\) of \(\mat{P}^{-1}\) truncated at order
  \(\lceil 2^k\delta / t \rceil t\), which is at least the target order
  \(d_j\).  The subsequent Line~\ref{step:TruncatedInverse:shaveoff} further
  truncates to shave off the possible extraneous expansion terms, and also
  selects the relevant rows of \(F\). Note that \(t > 0\): if \(\bar{P}\) was
  constant, then the principal \(m \times m\) submatrix of
  \(\mat{\bar{P}}^{-1}\) would be constant, i.e.\ \(P^{-1}\) would be constant,
  which is not the case since \(\deg(P) > 0\).

  As noted in Lemma~\ref{lem:lifting}, the \(k\)th call to \textproc{SeriesSol}
  involves a call to \textproc{HighOrderComp}, which does not depend on the
  matrix \(E\) at this iteration and which will re-compute the same high order
  components as the previous iterations, plus possibly one new such component.
  To avoid this redundancy, we pre-compute all required high-order components
  before the main loop at Line~\ref{step:TruncatedInverse:HighOrderComp}.

  As for complexity, only
  Lines~\ref{step:TruncatedInverse:HighOrderComp} and \ref{step:TruncatedInverse:SeriesSol}
  use arithmetic operations. The construction of \(\bar{P}\) in
  Lemma~\ref{lem:partial_linearization} implies \(\det(\bar{P}(0)) =
  \det(\bar{P})(0) = \det(P)(0) \neq 0\), hence we can apply
  Lemma~\ref{lem:lifting}. Here, \(\bar{P}\) is \(\bar{m} \times \bar{m}\) with \(m
  \le \bar{m} < 3m\), and \(t = \deg(\mat{\bar{P}}) \le \lceil \Delta(P) / m
  \rceil\).

  Hence, using notation \(s_k = \lceil 2^k\delta / \avgdeg \rceil\), the
  complexity is within
  \[
    \bigO{\timepmmrec{m}{\avgdeg} + \sum_{k=1}^{\ell}  \log(s_k+1) \left\lceil \frac{s_k n_k}{m} \right\rceil \timepmm{m}{\avgdeg} },
    \text{ with}
  \]
  \[
    \sum_{k=1}^{\ell}  \log(s_k+1) \left\lceil \frac{s_k n_k}{m} \right\rceil
    \in
    \bigO{\lceil \delta/\avgdeg \rceil \left(\ell^2 + \ell \log(\lceil \delta/\avgdeg \rceil)\right)}
  \]
  thanks to the upper bounds
  \(
    \log_2(s_k+1)
    \le \ell+1 + \log_2(\lceil \delta/\avgdeg \rceil)
  \)
  and \(\lceil s_k n_k/m
  \rceil \le \lceil s_k/2^{k-1} \rceil
  = \lceil 2\delta/\avgdeg \rceil\).

  To obtain the claimed general cost bound, it remains to note that
  \[
    \ell^2 + \ell \log(\lceil \delta/\avgdeg \rceil)
    \in \bigO{\log(m) \log(m\lceil \delta/\avgdeg \rceil)},
  \]
  with \(m\lceil \delta/\avgdeg \rceil \in \bigO{m + \sumTuple{d}/\avgdeg}\).
  For the simplified bound, we first use \(t \ge 1\) to bound \(\log(m +
  \sumTuple{d}/\avgdeg)\) by \(\log(m + \sumTuple{d})\). The assumptions on the
  introduced parameter \(n\) allow us to write \(t \in \bigO{n/m}\). In
  particular, \(\timepmmrec{m}{t}\) is in \(\bigO{m^\expmm \timepm{n/m}
  \log(n/m)}\), which is within the claimed bound.  It remains to observe that
  \begin{align*}
    \left\lceil \frac{\sumTuple{d}}{m\avgdeg} \right\rceil
       & \timepmm{m}{\avgdeg} \in \bigO{ \left( 1 + \frac{\sumTuple{d}}{m\avgdeg} \right) m^\expmm \timepm{t}} \\
    & \subseteq \bigO{ m^\expmm \timepm{\frac{n}{m}} + \frac{\sumTuple{d}}{m} m^\expmm \frac{\timepm{t}}{t}} \\
    & \subseteq \bigO{m^\expmm \timepm{\frac{n}{m}} + \frac{\sumTuple{d}}{n} m^\expmm \timepm{\frac{n}{m}}}
    \subseteq \bigO{\left\lceil \frac{\sumTuple{d}}{n} \right\rceil m^\expmm \timepm{\frac{n}{m}}}.
  \end{align*}
  Here we have used the superlinearity assumption on \(\timepm{\cdot}\), which
  gives us \(\frac{\timepm{t}}{t} \in \bigO{\frac{\timepm{n/m}}{n/m}}\).
\end{proof}

%
%

\subsection{Polynomial matrix truncated product}
\label{sec:polmat:TruncatedProduct}

\begin{proposition}
  \label{prop:TruncatedProduct}
  Given \(\mat{F} \in \pmatspace{n}{m}\), \(\mat{G} \in \pmatspace{m}{m}\), and
  \(d = (d_j)_j \in \NN^m\), Algorithm \texttt{TruncatedProduct} uses
  \[
    \bigO{\sum_{0 \le k < \lceil \log_2(m) \rceil} \left\lceil 2^k \frac{n}{m} \right\rceil \timepmm{2^{-k}m}{2^k\left\lceil\frac{D}{m}\right\rceil}}
  \]
  operations in \(\field\) and returns the truncated product \((FG) \rem x^d\)
  that is, \(FG\) with column \(j\) truncated at order \(d_j\). Here, \(D\) is
  the maximum between \(\sumTuple{d}\) and the sum of the degrees of the
  nonzero columns of \(F\).
  If \(m\) is both in \(\bigO{n}\) and \(\bigO{D}\), this cost bound is
  in \(\bigO{m^{\expmm-2} n \timepm{D}}\).
\end{proposition}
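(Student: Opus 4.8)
The plan is to prove correctness and the two cost bounds for \texttt{TruncatedProduct} by reducing, through dyadic splittings, to a logarithmic number of \emph{balanced} truncated products, each dispatched to fast rectangular matrix multiplication. There are two unrelated sources of unbalancedness to tame at once: the column degrees of $F$ (which only satisfy $\sum_j \deg(F_{*,j}) \le D$) and the truncation orders $d=(d_j)_j$ (which only satisfy $\sumTuple{d}\le D$). First I would set $\delta=\lceil D/m\rceil$ and partition the $m$ columns of $F$ into the $\bigO{\log m}$ groups $F_{[0]},F_{[1]},\ldots$, where $F_{[k]}$ collects the columns of degree in $[2^{k-1}\delta,2^k\delta)$ and $F_{[0]}$ those of degree $<\delta$; writing $G^{[k]}$ for the rows of $G$ indexed by the columns of $F_{[k]}$, one has $FG=\sum_k F_{[k]}G^{[k]}$. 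A pigeonhole argument on the degree sum gives that $F_{[k]}$ has at most $m/2^{k-1}$ columns for $k\ge 1$ (and $F_{[0]}$ at most $m$), and likewise that at most $m/2^k$ output columns have $d_j\ge 2^k\delta$.

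Then, for each group $k$, I would split the work on $G^{[k]}$ according to whether the target order $d_j$ is at least $2^k\delta$ (the ``deep'' columns for this group) or not. For the deep columns, chop column $j$ of $G^{[k]}$ into chunks of degree $<2^k\delta$, collect all chunks over all deep $j$ as the columns of one matrix $\bar{G}$ (a coefficient reorganization, no field operations), and compute $F_{[k]}\bar{G}$; the number of chunks is $\sum_{d_j\ge 2^k\delta}\lceil d_j/(2^k\delta)\rceil\le \sumTuple{d}/(2^k\delta)+\#\{j: d_j\ge 2^k\delta\}\in\bigO{m/2^k}$, so this is a single multiplication of an $n\times\bigO{m/2^k}$ by an $\bigO{m/2^k}\times\bigO{m/2^k}$ polynomial matrix of degree $\bigO{2^k\delta}$; reassembling the chunks with the shifts $x^{i\cdot 2^k\delta}$ recovers $F_{[k]}G^{[k]}_{*,j}$, which is then truncated at $d_j$. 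For the remaining columns (those with $d_j<2^k\delta$), sub-partition them dyadically by $d_j$ and, for the sub-group with $2^{l-1}\delta<d_j\le 2^l\delta$ (so $l<k$), multiply $F_{[k]}\rem x^{2^l\delta}$ — an $n\times\bigO{m/2^k}$ matrix of degree $<2^l\delta$ — by the corresponding $\bigO{m/2^k}\times\bigO{m/2^l}$ block of $G$ truncated at order $2^l\delta$, then truncate at the individual $d_j$'s. Correctness is then routine: the $F$-groups cover all columns of $F$, the (sub-)partitions of output columns are exhaustive, each partial product is computed modulo a power of $x$ at least as large as the requested $d_j$, the chunking and linearization identities are exact, and the final output is the column-wise sum of these partial products with the extraneous high-order terms shaved off.

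For the cost, I would argue that the ``deep'' step for group $k$ costs, after accounting for the tall rectangular shape (the $n$ rows split into $\lceil 2^kn/m\rceil$ blocks), $\bigO{\lceil 2^kn/m\rceil\,\timepmm{2^{-k}m}{2^k\lceil D/m\rceil}}$ — precisely the $k$-th term of the claimed sum — and that, for a fixed sub-group index $l$, the costs of the ``shallow'' steps coming from all groups $k>l$ scale like $2^{k(2-\omega)}$ in $k$, hence form a convergent geometric series (this is where $\omega>2$ enters) summing to $\bigO{\lceil 2^ln/m\rceil\,\timepmm{2^{-l}m}{2^l\lceil D/m\rceil}}$, again the $l$-th term. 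Summing over the $\bigO{\log m}$ groups gives the displayed bound, a trivial base case ($m$ a small constant) being absorbed into $\bigO{n\timepm{D}}$. For the simplified bound, assuming $m\in\bigO{n}$ and $m\in\bigO{D}$, I would bound each term of the displayed sum via $\timepmm{s}{t}\in\bigO{s^\omega\timepm{t}}$, the monotonicity of $t\mapsto\timepm{t}/t$ (which gives $\timepm{2^k\lceil D/m\rceil}\in\bigO{(2^k/m)\timepm{D}}$ since $2^k\le m$ in the sum, and more plainly $m\timepm{\lceil D/m\rceil}\in\bigO{\timepm{D}}$), and $\lceil 2^kn/m\rceil\in\bigO{1+2^kn/m}$; this bounds the $k$-th term by $\bigO{2^{k(2-\omega)}\,nm^{\omega-2}\timepm{D}}$ up to a lower-order rounding contribution, and since $\omega>2$ the sum is $\bigO{nm^{\omega-2}\timepm{D}}$ (the hypothesis $m\in\bigO{D}$ handling the degenerate regime $D<m$, where $\lceil D/m\rceil=1$).

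The main obstacle is the complexity bookkeeping just sketched: the dyadic thresholds for the column degrees of $F$ and for the $d_j$'s have to be chosen so that, \emph{simultaneously}, each partial product has matrix dimension and polynomial degree balanced against each other with the right powers of $2^k$, the number of linearization chunks stays in $\bigO{m/2^k}$ rather than blowing the dimension back up to $\bigO{m}$, and the doubly-indexed sum over (degree-group, order-subgroup) pairs is dominated by its near-diagonal and collapses geometrically. It is exactly the hypothesis $\omega>2$ that makes those geometric ratios ($2^{2-\omega}<1$) work; a naive one-level partition — padding short columns up to the group degree, or taking a triangular double sum without the $\omega>2$ decay — would only give $\bigO{nm^2\timepm{D/m}}$ and miss the target bound.
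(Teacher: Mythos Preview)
Your approach is correct but takes a genuinely different decomposition from the paper's. You partition the \emph{columns} of $F$ by their degree into dyadic groups $F_{[k]}$, and then for each group further partition the output columns by their truncation order, leading to a doubly-indexed family of balanced products whose costs you collapse via the geometric ratio $2^{2-\omega}<1$. The paper instead performs an $x$-adic splitting of $F$ itself: it writes $F = F^{(0)} + \sum_{k\ge 1} F^{(k)} x^{2^k\delta}$ with $\deg(F^{(k)}) < 2^k\delta$, and observes that the slice $F^{(k)}$ has nonzero columns only at indices $i$ with $\operatorname{cdeg}(F_{*,i}) \ge 2^k\delta$ (at most $m/2^k$ of them, by the degree-sum bound) and, being shifted by $x^{2^k\delta}$, contributes only to output columns $j$ with $d_j \ge 2^k\delta$ (also at most $m/2^k$, by the bound on $\sumTuple{d}$). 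This yields a \emph{single} loop over $k$, with the $k$-th iteration already balanced at matrix dimension $O(m/2^k)$ and degree $O(2^k\delta)$ after one column-linearization of the relevant block of $G$; no secondary partition on the $d_j$'s, and no geometric-sum argument is needed for the main displayed bound (the ratio $2^{2-\omega}$ only enters in the final simplified bound). Your route works and is arguably the more direct attack when one thinks of column-degree unbalancedness, but the paper's degree-slicing is more economical: it exploits that the \emph{same} threshold $2^k\delta$ simultaneously controls the surviving $F$-columns and the relevant output columns, which sidesteps precisely the cross-term bookkeeping you flagged as the main obstacle.
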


\begin{proof}
  For convenience, we denote by \(I_k\) and \(J_k\) the sets \(I\) and \(J\)
  defined at the iteration \(k\) of the main loop of the algorithm. Let also
  \(R^{(k)}\) be the matrix \(R\) at the beginning of iteration \(k\), and
  \(R^{(\ell)}\) be the output \(R\). Let \(F^{(0:k)} = F \rem x^{2^k\delta}\) for
  \(1\le k \le \ell\), with in particular \(F^{(0:\ell)} = F\) since \(\max_j
  d_j \le \sumTuple{d} \le m\delta \le 2^\ell \delta\).

  At the beginning of the first iteration, \(R^{(1)} = (F^{(0)} G) \rem x^{d} =
  (F^{(0:1)} G) \rem x^{d}\). Then, to prove the correctness of the algorithm,
  we let \(k \in \range{\ell}\) and assume \(R^{(k)} = (F^{(0:k)} G) \rem
  x^{d}\), and we show \(R^{(k+1)} = (F^{(0:k+1)}G) \rem x^{d}\).

  First, consider \(j\not\in J_k\). The column \(j\) of \(R^{(k)}\) is not
  modified by iteration \(k\), i.e.\ \(\matcol{R}{j}^{(k)} =
  \matcol{R}{j}^{(k+1)}\). On the other hand, one has \(d_j < 2^k\delta\),
  hence \(F^{(0:k)} \rem x^{d_j} = F^{(0:k+1)} \rem x^{d_j}\). We obtain
  \[
    \matcol{R}{j}^{(k+1)}
    = \matcol{R}{j}^{(k)}
    = (F^{(0:k)}\matcol{G}{j}) \rem x^{d_j}
    = (F^{(0:k+1)}\matcol{G}{j}) \rem x^{d_j}.
  \]
  meaning that the sought equality holds for the columns \(j \not \in J_k\).

\renewcommand{\thealgorithm}{5}

\begin{algorithm}[t]
  \algoCaptionLabel{TruncatedProduct}{F,G,d}
  \begin{algorithmic}[1]
    \Require \(\mat{F} \in \pmatspace{n}{m}\), \(\mat{G} \in \pmatspace{m}{m}\),
        \(d = (d_1,\ldots,d_m) \in \NN^m\)
    \Ensure the column-truncated product \((FG) \rem x^d \in \pmatspace{n}{m}\)

    \State
    \(\gamma \gets\) sum of the degrees of the nonzero columns of \(F\)

    \State
    \(D \gets \max(\sumTuple{d},\gamma)\); ~~
    \(\delta \gets \lceil D/m\rceil\); ~~
    \(\ell \gets \lceil\log_2(m)\rceil\)

    \State
    write \(\mat{F} = \mat{F}^{(0)} + \sum_{1 \le k < \ell} \mat{F}^{(k)} x^{2^k\delta}\)
    with each \(F^{(k)}\) in \(\pmatspace{n}{m}\),
    \(\deg(F^{(k)}) < 2^k\delta\) for \(1 \le k < \ell\), and \(\deg(F^{(0)}) < 2\delta\)

    \State
    \(R \in \pmatspace{n}{m} \gets (F^{(0)}G) \rem x^d\) \Comment{stores the result}
    \label{step:TruncatedProduct:first_product}

    \For{\(k \gets 1,\ldots, \ell-1\)}
    \State \(I \gets \{i\in\range{m} \mid \cdeg{\matcol{F}{i}} \ge 2^k\delta\}\)
    \State \(J \gets \{j\in\range{m} \mid d_j \ge 2^k\delta\}\)
    \State \(e \gets (d_j - 2^k\delta)_{j \in J}\)
    \State \(\matcols{R}{J} \gets \matcols{R}{J} + \left((\matcols{F}{I}^{(k)}\submat{G}{I}{J}) \rem x^{e}\right) x^{2^k\delta} \)
        \label{step:TruncatedProduct:partial_product}
    \EndFor
    \State \Return \(R\)
  \end{algorithmic}
\end{algorithm}
  Now, consider \(j \in J_k\). For \(i\not\in I_k\), one has
  \(\cdeg{\matcol{F}{i}} < 2^k\delta\), hence \(\matcol{F}{i}^{(k)} = 0\): it
  follows that \(F^{(k)} \matcol{G}{j} =
  \matcols{F}{I}^{(k)}\submat{G}{I}{j}\). Thus
  Line~\ref{step:TruncatedProduct:partial_product} computes
  \begin{align*}
    \matcol{R}{j}^{(k+1)} \gets ~
       & \matcol{R}{j}^{(k)} + \left((\matcols{F}{I}^{(k)}\submat{G}{I}{j}) \rem x^{d_j - 2^k\delta}\right) x^{2^k\delta} \\
       & = (F^{(0:k)} \matcol{G}{j}) \rem x^{d_j} + (x^{2^k\delta}F^{(k)}\matcol{G}{j}) \rem x^{d_j} \\
       & = (F^{(0:k+1)}\matcol{G}{j}) \rem x^{d_j}.
  \end{align*}
  This completes the proof of correctness.

  For complexity, note that the definition of \(D\) gives \(\card{I_k} \le
  2^{-k}m\) and \(\card{J_k} \le 2^{-k}m\); also, only
  Lines~\ref{step:TruncatedProduct:first_product} and \ref{step:TruncatedProduct:partial_product}
  use arithmetic operations.

  At Line~\ref{step:TruncatedProduct:first_product}, we first compute \(F^{(0)}G\),
  then truncate. The left and right matrices in this product are respectively
  \(n\times m\) of degree \(<2\delta\), and \(m\times m\) with sum of column
  degrees \(\le m\delta\). The product can be performed by expanding the
  columns of \(G\) into \(\le 2m\) columns all of degree \(\le\delta\), which leads
  to the complexity \(\bigO{\lceil \frac{n}{m} \rceil \timepmm{m}{\delta}}\).

  At Line~\ref{step:TruncatedProduct:partial_product}, the multiplication by a
  power of \(x\) is free. The sum consists in adding two matrices with \(n\)
  rows and with column degrees strictly less than \((d_j)_{j\in J_k}\)
  entry-wise.
  This costs \(\bigO{n \sumTuple{d}}\) operations in \(\field\) at each
  iteration, hence \(\bigO{\ell n \sumTuple{d}}\) in total. Using the trivial
  lower bound on \(\timepmm{\cdot}{\cdot}\) shows that this is within the
  claimed overall cost bound:
  \begin{align*}
    \ell n \sumTuple{d} \le \sum_{k=0}^{\ell-1} n D
                        & \le \sum_{k=0}^{\ell-1} \left\lceil 2^k \frac{n}{m} \right\rceil (2^{-k}m)^2 2^k \left\lceil \frac{D}{m} \right\rceil \\
                        & \in \bigO{\sum_{k=0}^{\ell-1} \left\lceil 2^k \frac{n}{m} \right\rceil \timepmm{2^{-k}m}{2^k\left\lceil\frac{D}{m}\right\rceil}}.
  \end{align*}

  Finally, for the truncated product, we first multiply
  \(\matcols{F}{I_k}^{(k)}\submat{G}{I_k}{J_k}\) and then truncate. The left
  matrix in this product has \(n\) rows, \(\le 2^{-k}m\) columns, and degree
  \(<2^k\delta\). The right matrix has row and columns dimensions both \(\le
  2^{-k}m\), and sum of column degrees \(\le\sumTuple{d}\). The product can be
  performed by expanding the columns of \(\submat{G}{I_k}{J_k}\) into \(\le
  2^{1-k}m\) columns all of degree \(\le \sumTuple{d}/(2^{-k}m) \le
  2^k\delta\), which leads to the complexity \(\bigO{\lceil 2^k \frac{n}{m}
  \rceil \timepmm{2^{-k}m}{2^k\delta}}\). Summing the latter bound for \(k \in
  \{1,\ldots,\ell-1\}\), and adding the term \(k=0\) for
  Line~\ref{step:TruncatedProduct:first_product}, yields the claimed cost bound.

  The final simplified complexity bound follows from the assumptions mentioned
  in Section~\ref{sec:polmat:timefunc}: \(\timepmm{\mu}{\delta}\) is in
  \(\bigO{\mu^\expmm \timepm{\delta}}\), \(\timepm{\cdot}\) is superlinear, and
  \(\expmm>2\) implies that the sum \(\sum_{0 \le k < \ell}
  2^{k(2-\expmm)}\) is bounded by a constant.
\end{proof}


\clearpage 


\newcommand{\Gathen}{\relax}

\appendix

%
%

\section{Decomposition of the space \texorpdfstring{$\field^n$}{F**n}}

\label{sec:krylovstuff}

The following is part of the basic material when studying the behavior of a
linear operator $A$ and the decomposition of $\field ^n$ into cyclic
subspaces~\cite[Chap.\,VII]{Gantmacher1960a}, which corresponds to the diagonal
matrix form of Smith and the block diagonal form of Frobenius. The same
concepts can also be used for decompositions associated with the triangular
form of Hermite~[\citealp{Vil97};\citealp[Chap.\,9]{Sto00}], or more general
forms such as column reduced ones~\cite[Sec.\,6.4.6, p.\,424]{Kailath80}.

\myparagraph{Existence of a Krylov basis}
Let \(A \in \matspace{n}{n}\) and \(U \in \matspace{n}{m}\), and 
for $1\leq j\leq m$, let $d_j$ be the first integer such that
\begin{equation} \label{eq:characd_app} 
A^{d_j}u_j \in \operatorname{Span} (u_j, Au_j, \ldots, A^{d_j-1} u_j) +\orb{A}{U_{*,1..j-1}}.
\end{equation}
We prove that the columns of \(\kry{A}{U}{d}\) form a basis of \(\orb{A}{U}\).

Given a subspace $\mathcal E \subseteq \field ^n $ invariant with respect to
$A$, we say that two vectors $u,v\in \field^n $ are congruent modulo $\mathcal
E$ if and only if $u-v \in \mathcal E$, and we write $u\equiv v \bmod \mathcal
E$. For a fixed $u$, the set of polynomials $p\in \field[x]$ such that $p(A)u
\equiv 0 \bmod \mathcal E$ is an ideal of $\field[x]$, generated by a monic
polynomial which is the minimal polynomial of $u$ modulo $\mathcal E$. In
particular, if $p(A)u \equiv 0 \bmod \mathcal E$, then $q(A)u \equiv 0 \bmod
\mathcal E$ for all multiples $q\in \field[x]$ of $p$.

The proof of the characterization in \cref{eq:characd_app} is by induction on
$m$. For one vector, $d_1$ is the first integer such that 
\[
  (x^{d_1})(A)=A^{d_1}u_1 \equiv 0 \bmod \operatorname{Span} (u_1, Au_1, \ldots, A^{d_1-1} u_1).
\]
Therefore all the subsequent vectors  $(x^{d_1}x^k)(A)=A^{d_1+k}u_1$ with
$k\geq 0$ are zero modulo \(\kry{A}{u_1}{d_1}\) which thus forms a basis of
\(\orb{A}{u_1}\). Then assume that the property holds for all \(U\) of column
dimension \(m\ge 1\): \(\kry{A}{U}{d}\) is a basis of \(\orb{A}{U}\). For
$v\in \field ^m$, let~$d_{m+1}$ be the smallest integer so that $A^{d_{m+1}}v$
is a combination of the previous iterates of $v$ and the vectors in
\(\orb{A}{U}\). By analogy to the case $m=1$, all subsequent vectors
\(A^{d_{m+1}+k} v\) are also linear combinations of the vectors in
$\operatorname{Span} (v, Av, \ldots, A^{d_{m+1}-1} v) + \orb{A}{U}.$ So the
latter subspace is $\orb{A}{[U,v]}$, and by induction hypothesis the columns of
\(\kry{A}{v}{d_{m+1}}\) and \(\kry{A}{U}{d}\) form one of its bases. 

\myparagraph{Maximal Krylov indices} The tuple $d$ constructed this way is
lexicographically maximal so that \(\kry{A}{U}{d}\) is a basis of $\orb{A}{U}$.
The existence of an $\ell$ such that $(d'_1, \ldots, d'_\ell, \ldots d'_m)$ is
another suitable tuple with $d'_\ell<d_\ell$ would indeed contradict the fact
that $d_\ell$ corresponds to the smallest linear dependence.  

\section{A slightly different algorithm}
\label{sec:alternativeto}

The matrices $xI-A$ and~$I-xA$ considered in Algorithm \texttt{MaxIndices} and
Algorithm \texttt{KrylovMatrix} mirror each other. As we explain below, $xI-A$
could also be used to compute a Krylov matrix. (Algorithm \texttt{KrylovMatrix}
instead considers $I-xA$ simply for a slightly more convenient presentation.)

Consider a minimal kernel basis $\trsp{[\trsp{S} \;\; \trsp{T}]}$ of $[xI-A
\;\; -U]$ as in Algorithm \texttt{MaxIndices}. Since $(xI-A)^{-1}U$ is strictly
proper and $S=(xI-A)^{-1}UT$, the column degrees in $T$ are greater than those
in $S$, and~$T$ is column reduced  since the kernel basis is
(Lemma~\ref{lem:kernel_basis}). Let $d_j$ be the degree of the $j$th column of $T$,
and let \(d = (d_1,\ldots,d_m)\). We denote by \(x^d\) the \(m\times m\)
diagonal polynomial matrix with diagonal entries $(x^{d_1}, \ldots, x^{d_m})$.
Substituting $1/x$ for $x$ in $(xI-A)S-UT=0$, and multiplying on the right by
$x^d$ we get 
\[
  (I-xA) S(1/x) x^{d-1} - U T(1/x) x^d = 0.
\]
By definition of the $d_j$'s, the right-hand term is a polynomial matrix, say
$U\hat T$. Since the $j$th column of $S$ has degree at most \(d_j - 1\) (and is
zero if $d_j=0$), the left-hand term is also a polynomial matrix, say
$(I-xA)\hat S$. 
It follows that the columns of $\trsp{[\trsp{\hat{S}} \;\; \trsp{\hat{T}}]}$
are in the kernel of $[I-xA \;\; -U]$, and we can now also verify that they
form a basis. Since $\trsp{[\trsp{S} \;\; \trsp{T}]}$ is itself a minimal
basis, it is irreducible, i.e.\ has full rank for all finite values
of~$x$~\cite[Thm\,6.5-10, p.\,458]{Kailath80}. Equivalently, $S$ and $T$ are
coprime~\cite[Lem.\,6.3-6, p.\,379]{Kailath80}, so there exists $V$ unimodular
such that 
$V \trsp{[\trsp{S} \;\; \trsp{T}]} = \trsp{[I \;\; 0]}$. It follows that 
\[
  V(1/x)
  \begin{bmatrix} x {\hat S} \\ {\hat T} \end{bmatrix}
  = \begin{bmatrix} {I} \\ {0} \end{bmatrix} x^d .
\]
and since $V$ is unimodular, the rank of $\trsp{[\trsp{\hat{S}(x_0)} \;\;
\trsp{\hat{T}(x_0)}]}$ is full for all values $x_0\neq 0$ of $x$. The fact that
$T$ is column reduced adds that $\hat T(0)$ is invertible, which means that
$\trsp{[\trsp{\hat{S}} \;\; \trsp{\hat{T}}]}$ is irreducible.
By irreducibility (\cite[Thm\,6.5-10, p.\,458]{Kailath80} again, here after
column reduction), we get as announced that $\trsp{[\trsp{\hat{S}} \;\;
\trsp{\hat{T}}]}$ is a kernel basis of $[I-xA \;\; -U]$.

If $A$ is nonsingular, then, in a similar way to what we saw with Lemma~\ref{lem:kryloveval:constant_invertible}, 
we can prove that $T(0)$ is invertible. In this case it follows that $\hat T$ is column reduced, and that 
$\trsp{[\trsp{\hat{S}} \;\;
\trsp{\hat{T}}]}$ is a minimal kernel basis of $[I-xA \;\; -U]$. Minimality is not guaranteed if $A$ is singular.  

We see that Algorithm \texttt{KrylovMatrix} could be modified using the same
kernel basis as in Algorithm \texttt{MaxIndices}, and by inserting an instruction
to work with~$\hat T$ afterwards. 
Since $\hat T(0)$ si invertible ($T$ is column reduced), 
the fact that this basis may not be minimal is not a problem for the power series expansion and the subsequent steps in Algorithm \texttt{KrylovMatrix}. Also, since $\sumTuple{\cdeg{\hat T}} \le \sumTuple{\cdeg{T}} \le n$, the cost bound in Lemma~\ref{lem:KrylovMatrix} is not affected. 
In particular, when using Algorithm \texttt{MaxIndices} and Algorithm \texttt{KrylovMatrix} in 
succession as in Algorithm \texttt{MaxKrylovBasis}, this would lead to computing
only one kernel basis instead of two.

\clearpage 


\begin{thebibliography}{27}


\ifx \showCODEN    \undefined \def \showCODEN     #1{\unskip}     \fi
\ifx \showDOI      \undefined \def \showDOI       #1{#1}\fi
\ifx \showISBNx    \undefined \def \showISBNx     #1{\unskip}     \fi
\ifx \showISBNxiii \undefined \def \showISBNxiii  #1{\unskip}     \fi
\ifx \showISSN     \undefined \def \showISSN      #1{\unskip}     \fi
\ifx \showLCCN     \undefined \def \showLCCN      #1{\unskip}     \fi
\ifx \shownote     \undefined \def \shownote      #1{#1}          \fi
\ifx \showarticletitle \undefined \def \showarticletitle #1{#1}   \fi
\ifx \showURL      \undefined \def \showURL       {\relax}        \fi
\providecommand\bibfield[2]{#2}
\providecommand\bibinfo[2]{#2}
\providecommand\natexlab[1]{#1}
\providecommand\showeprint[2][]{arXiv:#2}

\bibitem[Alman et~al\mbox{.}(2024)]%
        {ADVWXXZ24}
\bibfield{author}{\bibinfo{person}{J. Alman}, \bibinfo{person}{R. Duan},
  \bibinfo{person}{V.~V. Williams}, \bibinfo{person}{Y. Xu},
  \bibinfo{person}{Z. Xu}, {and} \bibinfo{person}{R. Zhou}.}
  \bibinfo{year}{2024}\natexlab{}.
\newblock \bibinfo{booktitle}{\emph{{More Asymmetry Yields Faster Matrix
  Multiplication}}}.
\newblock \bibinfo{type}{{arXiv:2404.16349}}.
\newblock
\urldef\tempurl%
\url{https://arxiv.org/abs/2404.16349}
\showURL{%
\tempurl}


\bibitem[Beckermann et~al\mbox{.}(2006)]%
        {BLV06}
\bibfield{author}{\bibinfo{person}{B. Beckermann}, \bibinfo{person}{G. Labahn},
  {and} \bibinfo{person}{G. Villard}.} \bibinfo{year}{2006}\natexlab{}.
\newblock \showarticletitle{Normal forms for general polynomial matrices}.
\newblock \bibinfo{journal}{\emph{J. Symbolic Comput.}} \bibinfo{volume}{41},
  \bibinfo{number}{6} (\bibinfo{year}{2006}), \bibinfo{pages}{708--737}.
\newblock
\urldef\tempurl%
\url{https://doi.org/10.1016/j.jsc.2006.02.001}
\showDOI{\tempurl}


\bibitem[Duan et~al\mbox{.}(2023)]%
        {DWZ23}
\bibfield{author}{\bibinfo{person}{R. Duan}, \bibinfo{person}{H. Wu}, {and}
  \bibinfo{person}{R. Zhou}.} \bibinfo{year}{2023}\natexlab{}.
\newblock \showarticletitle{{Faster Matrix Multiplication via Asymmetric
  Hashing}}. In \bibinfo{booktitle}{\emph{Proceedings 64th IEEE Symposium on
  Foundations of Computer Science (FOCS)}}. \bibinfo{pages}{2129--2138}.
\newblock
\urldef\tempurl%
\url{https://doi.org/10.1109/FOCS57990.2023.00130}
\showDOI{\tempurl}


\bibitem[Gantmacher(1960)]%
        {Gantmacher1960a}
\bibfield{author}{\bibinfo{person}{F.~R. Gantmacher}.}
  \bibinfo{year}{1960}\natexlab{}.
\newblock \bibinfo{booktitle}{\emph{The Theory of Matrices. Volume one}}.
\newblock \bibinfo{publisher}{Chelsea Publishing Company}.
\newblock
\urldef\tempurl%
\url{https://bookstore.ams.org/chelgantset}
\showURL{%
\tempurl}


\bibitem[\Gathen{von zur Gathen} and Gerhard(1999)]%
        {GathenGerhard1999}
\bibfield{author}{\bibinfo{person}{J. \Gathen{von zur Gathen}} {and}
  \bibinfo{person}{{J.} Gerhard}.} \bibinfo{year}{1999}\natexlab{}.
\newblock \bibinfo{booktitle}{\emph{Modern computer algebra}}.
\newblock \bibinfo{publisher}{Third edition 2013. Cambridge University Press}.
\newblock
\showISBNx{9781107039032}
\urldef\tempurl%
\url{https://doi.org/10.1017/CBO9781139856065}
\showDOI{\tempurl}


\bibitem[Giesbrecht(1995)]%
        {giesbrecht1995nearly}
\bibfield{author}{\bibinfo{person}{M. Giesbrecht}.}
  \bibinfo{year}{1995}\natexlab{}.
\newblock \showarticletitle{{Nearly Optimal Algorithms for Canonical Matrix
  Forms}}.
\newblock \bibinfo{journal}{\emph{SIAM J. Comput.}} \bibinfo{volume}{24},
  \bibinfo{number}{5} (\bibinfo{year}{1995}), \bibinfo{pages}{948--969}.
\newblock
\urldef\tempurl%
\url{https://doi.org/10.1137/S0097539793252687}
\showDOI{\tempurl}


\bibitem[Gupta et~al\mbox{.}(2012)]%
        {GuSaStVa12}
\bibfield{author}{\bibinfo{person}{S. Gupta}, \bibinfo{person}{S. Sarkar},
  \bibinfo{person}{A. Storjohann}, {and} \bibinfo{person}{J. Valeriote}.}
  \bibinfo{year}{2012}\natexlab{}.
\newblock \showarticletitle{Triangular $x$-basis decompositions and
  derandomization of linear algebra algorithms over ${K}[x]$}.
\newblock \bibinfo{journal}{\emph{J. Symbolic Comput.}} \bibinfo{volume}{47},
  \bibinfo{number}{4} (\bibinfo{year}{2012}), \bibinfo{pages}{422--453}.
\newblock
\showISSN{0747-7171}
\urldef\tempurl%
\url{https://doi.org/10.1016/j.jsc.2011.09.006}
\showDOI{\tempurl}


\bibitem[Jacobson(2009)]%
        {Jac85}
\bibfield{author}{\bibinfo{person}{N. Jacobson}.}
  \bibinfo{year}{2009}\natexlab{}.
\newblock \bibinfo{booktitle}{\emph{{Basic Algebra I}}}.
\newblock \bibinfo{publisher}{Dover Publications Inc.}
\newblock
\urldef\tempurl%
\url{https://store.doverpublications.com/0486471896.html}
\showURL{%
\tempurl}
\newblock
\shownote{{Second Edition W.H. Freeman 1985}}.


\bibitem[Jeannerod et~al\mbox{.}(2017)]%
        {JeannerodNeigerSchostVillard2017}
\bibfield{author}{\bibinfo{person}{C.-P. Jeannerod}, \bibinfo{person}{V.
  Neiger}, \bibinfo{person}{\'E. Schost}, {and} \bibinfo{person}{G. Villard}.}
  \bibinfo{year}{2017}\natexlab{}.
\newblock \showarticletitle{Computing minimal interpolation bases}.
\newblock \bibinfo{journal}{\emph{J. Symbolic Comput.}}  \bibinfo{volume}{83}
  (\bibinfo{year}{2017}), \bibinfo{pages}{272--314}.
\newblock
\urldef\tempurl%
\url{https://doi.org/10.1016/j.jsc.2016.11.015}
\showDOI{\tempurl}


\bibitem[Kailath(1980)]%
        {Kailath80}
\bibfield{author}{\bibinfo{person}{T. Kailath}.}
  \bibinfo{year}{1980}\natexlab{}.
\newblock \bibinfo{booktitle}{\emph{{Linear Systems}}}.
\newblock \bibinfo{publisher}{Prentice-Hall}.
\newblock


\bibitem[Kalman(1963)]%
        {kal63}
\bibfield{author}{\bibinfo{person}{R.E. Kalman}.}
  \bibinfo{year}{1963}\natexlab{}.
\newblock \showarticletitle{{Mathematical Description of Linear Dynamical
  Systems}}.
\newblock \bibinfo{journal}{\emph{Journal of the Society for Industrial and
  Applied Mathematics, Series A: Control}} \bibinfo{volume}{1},
  \bibinfo{number}{2} (\bibinfo{year}{1963}), \bibinfo{pages}{152--192}.
\newblock
\urldef\tempurl%
\url{https://doi.org/10.1137/0301010}
\showDOI{\tempurl}


\bibitem[Keller-Gehrig(1985)]%
        {KeGe85}
\bibfield{author}{\bibinfo{person}{W. Keller-Gehrig}.}
  \bibinfo{year}{1985}\natexlab{}.
\newblock \showarticletitle{Fast algorithms for the characteristic polynomial}.
\newblock \bibinfo{journal}{\emph{Theoretical computer science}}
  \bibinfo{volume}{36} (\bibinfo{year}{1985}), \bibinfo{pages}{309--317}.
\newblock
\urldef\tempurl%
\url{https://doi.org/10.1016/0304-3975(85)90049-0}
\showDOI{\tempurl}


\bibitem[Labahn et~al\mbox{.}(2017)]%
        {LabahnNeigerZhou2017}
\bibfield{author}{\bibinfo{person}{G. Labahn}, \bibinfo{person}{V. Neiger},
  {and} \bibinfo{person}{W. Zhou}.} \bibinfo{year}{2017}\natexlab{}.
\newblock \showarticletitle{Fast, deterministic computation of the {H}ermite
  normal form and determinant of a polynomial matrix}.
\newblock \bibinfo{journal}{\emph{J. Complexity}}  \bibinfo{volume}{42}
  (\bibinfo{year}{2017}), \bibinfo{pages}{44--71}.
\newblock
\showISSN{0885-064X}
\urldef\tempurl%
\url{https://doi.org/10.1016/j.jco.2017.03.003}
\showDOI{\tempurl}


\bibitem[Lang(2002)]%
        {Lang2002}
\bibfield{author}{\bibinfo{person}{S. Lang}.} \bibinfo{year}{2002}\natexlab{}.
\newblock \bibinfo{booktitle}{\emph{{Algebra (revised third edition)}}}.
\newblock \bibinfo{publisher}{Springer-Verlag New-York Inc.}
\newblock
\urldef\tempurl%
\url{https://doi.org/10.1007/978-1-4613-0041-0}
\showDOI{\tempurl}


\bibitem[Moenck and Carter(1979)]%
        {MoCa79}
\bibfield{author}{\bibinfo{person}{R.~T. Moenck} {and} \bibinfo{person}{J.~H.
  Carter}.} \bibinfo{year}{1979}\natexlab{}.
\newblock \showarticletitle{Approximate algorithms to derive exact solutions to
  systems of linear equations}. In \bibinfo{booktitle}{\emph{Proceedings
  International Symposium on Symbolic and Algebraic Manipulation (Eurosam)}}
  \emph{(\bibinfo{series}{LNCS 72})}. \bibinfo{pages}{65--73}.
\newblock
\urldef\tempurl%
\url{https://doi.org/10.1007/3-540-09519-5_60}
\showDOI{\tempurl}


\bibitem[Neiger and Pernet(2021)]%
        {NeigerPernet21}
\bibfield{author}{\bibinfo{person}{V. Neiger} {and} \bibinfo{person}{C.
  Pernet}.} \bibinfo{year}{2021}\natexlab{}.
\newblock \showarticletitle{Deterministic computation of the characteristic
  polynomial in the time of matrix multiplication}.
\newblock \bibinfo{journal}{\emph{J. Complexity}}  \bibinfo{volume}{67}
  (\bibinfo{year}{2021}), \bibinfo{pages}{101572}.
\newblock
\showISSN{0885-064X}
\urldef\tempurl%
\url{https://doi.org/10.1016/j.jco.2021.101572}
\showDOI{\tempurl}


\bibitem[Neiger and Vu(2017)]%
        {NeigerVu2017}
\bibfield{author}{\bibinfo{person}{V. Neiger} {and} \bibinfo{person}{T.~X.
  Vu}.} \bibinfo{year}{2017}\natexlab{}.
\newblock \showarticletitle{Computing canonical bases of modules of univariate
  relations}. In \bibinfo{booktitle}{\emph{Proceedings ISSAC 2017}}
  (Kaiserslautern, Germany). \bibinfo{publisher}{ACM},
  \bibinfo{pages}{357--364}.
\newblock
\urldef\tempurl%
\url{https://doi.org/10.1145/3087604.3087656}
\showDOI{\tempurl}


\bibitem[Pernet and Storjohann(2007a)]%
        {PernetStorjohann2007}
\bibfield{author}{\bibinfo{person}{C. Pernet} {and} \bibinfo{person}{A.
  Storjohann}.} \bibinfo{year}{2007}\natexlab{a}.
\newblock \showarticletitle{Faster {A}lgorithms for the {C}haracteristic
  {P}olynomial}. In \bibinfo{booktitle}{\emph{Proceedings ISSAC 2007}}.
  \bibinfo{publisher}{ACM}, \bibinfo{pages}{307--314}.
\newblock
\showISBNx{978-1-59593-743-8}
\urldef\tempurl%
\url{https://doi.org/10.1145/1277548.1277590}
\showDOI{\tempurl}


\bibitem[Pernet and Storjohann(2007b)]%
        {PernetStorjohann2007Frob}
\bibfield{author}{\bibinfo{person}{C. Pernet} {and} \bibinfo{person}{A.
  Storjohann}.} \bibinfo{year}{2007}\natexlab{b}.
\newblock \bibinfo{booktitle}{\emph{Frobenius form in expected matrix
  multiplication time over sufficiently large fields}}.
\newblock \bibinfo{type}{{T}echnical {R}eport}.
\newblock
\urldef\tempurl%
\url{https://cs.uwaterloo.ca/~astorjoh/cpoly.pdf}
\showURL{%
\tempurl}


\bibitem[Storjohann(2000)]%
        {Sto00}
\bibfield{author}{\bibinfo{person}{A. Storjohann}.}
  \bibinfo{year}{2000}\natexlab{}.
\newblock \emph{\bibinfo{title}{{Algorithms for Matrix Canonical Forms}}}.
\newblock \bibinfo{thesistype}{Ph.\,D. Dissertation}.
  \bibinfo{school}{{Institut f\"ur Wissenschaftliches Rechnen, ETH-Zentrum,
  Zurich, Switzerland}}.
\newblock
\urldef\tempurl%
\url{https://www.research-collection.ethz.ch/bitstream/handle/20.500.11850/145127/eth-24018-02.pdf}
\showURL{%
\tempurl}


\bibitem[Storjohann(2001)]%
        {Sto01}
\bibfield{author}{\bibinfo{person}{A. Storjohann}.}
  \bibinfo{year}{2001}\natexlab{}.
\newblock \showarticletitle{{Deterministic computation of the Frobenius form}}.
  In \bibinfo{booktitle}{\emph{Proceedings 42nd IEEE Symposium on Foundations
  of Computer Science (FOCS)}}. IEEE, \bibinfo{pages}{368--377}.
\newblock
\urldef\tempurl%
\url{https://doi.org/10.1109/SFCS.2001.959911}
\showDOI{\tempurl}


\bibitem[Storjohann(2003)]%
        {Storjohann2003}
\bibfield{author}{\bibinfo{person}{A. Storjohann}.}
  \bibinfo{year}{2003}\natexlab{}.
\newblock \showarticletitle{High-order lifting and integrality certification}.
\newblock \bibinfo{journal}{\emph{J. Symbolic Comput.}} \bibinfo{volume}{36},
  \bibinfo{number}{3-4} (\bibinfo{year}{2003}), \bibinfo{pages}{613--648}.
\newblock
\urldef\tempurl%
\url{https://doi.org/10.1016/S0747-7171(03)00097-X}
\showDOI{\tempurl}


\bibitem[Villard(1997)]%
        {Vil97}
\bibfield{author}{\bibinfo{person}{G. Villard}.}
  \bibinfo{year}{1997}\natexlab{}.
\newblock \showarticletitle{{Fast Parallel Algorithms for Matrix Reduction to
  Normal Forms}}.
\newblock \bibinfo{journal}{\emph{Appl. Algebra Eng. Commun. Comput.}}
  \bibinfo{volume}{8}, \bibinfo{number}{6} (\bibinfo{year}{1997}),
  \bibinfo{pages}{511--537}.
\newblock
\urldef\tempurl%
\url{https://doi.org/10.1007/s002000050089}
\showDOI{\tempurl}


\bibitem[Williams et~al\mbox{.}(2024)]%
        {VWXXZ24}
\bibfield{author}{\bibinfo{person}{V.~V. Williams}, \bibinfo{person}{Y. Xu},
  \bibinfo{person}{Z. Xu}, {and} \bibinfo{person}{R. Zhou}.}
  \bibinfo{year}{2024}\natexlab{}.
\newblock \showarticletitle{{New Bounds for Matrix Multiplication: from Alpha
  to Omega}}. In \bibinfo{booktitle}{\emph{Proceedings ACM-SIAM Symposium on
  Discrete Algorithms (SODA)}}. \bibinfo{pages}{3792--3835}.
\newblock
\urldef\tempurl%
\url{https://doi.org/10.1137/1.9781611977912.134}
\showDOI{\tempurl}


\bibitem[Zhou and Labahn(2013)]%
        {ZhouLabahn2013}
\bibfield{author}{\bibinfo{person}{W. Zhou} {and} \bibinfo{person}{G. Labahn}.}
  \bibinfo{year}{2013}\natexlab{}.
\newblock \showarticletitle{Computing Column Bases of Polynomial Matrices}. In
  \bibinfo{booktitle}{\emph{Proceedings ISSAC 2013}}. \bibinfo{publisher}{ACM},
  \bibinfo{pages}{379--386}.
\newblock
\showISBNx{978-1-4503-2059-7}
\urldef\tempurl%
\url{https://doi.org/10.1145/2465506.2465947}
\showDOI{\tempurl}


\bibitem[Zhou et~al\mbox{.}(2012)]%
        {ZhouLabahnStorjohann12}
\bibfield{author}{\bibinfo{person}{W. Zhou}, \bibinfo{person}{G. Labahn}, {and}
  \bibinfo{person}{A. Storjohann}.} \bibinfo{year}{2012}\natexlab{}.
\newblock \showarticletitle{Computing Minimal Nullspace Bases}. In
  \bibinfo{booktitle}{\emph{Proceedings ISSAC 2012}} (Grenoble, France).
  \bibinfo{publisher}{ACM}, \bibinfo{pages}{366--373}.
\newblock
\showISBNx{978-1-4503-1269-1}
\urldef\tempurl%
\url{https://doi.org/10.1145/2442829.2442881}
\showDOI{\tempurl}


\bibitem[Zhou et~al\mbox{.}(2015)]%
        {ZLS15}
\bibfield{author}{\bibinfo{person}{W. Zhou}, \bibinfo{person}{G. Labahn}, {and}
  \bibinfo{person}{A. Storjohann}.} \bibinfo{year}{2015}\natexlab{}.
\newblock \showarticletitle{A deterministic algorithm for inverting a
  polynomial matrix}.
\newblock \bibinfo{journal}{\emph{J. Complexity}} \bibinfo{volume}{31},
  \bibinfo{number}{2} (\bibinfo{year}{2015}), \bibinfo{pages}{162--173}.
\newblock
\urldef\tempurl%
\url{https://doi.org/10.1016/j.jco.2014.09.004}
\showDOI{\tempurl}


\end{thebibliography}
\end{document}